


 \documentclass[final,3p,times,authoryear]{elsarticle}


\usepackage{amssymb}
\usepackage{amsmath}


\usepackage{bbm}
\usepackage{nicefrac}
\usepackage{tabularx} 
\usepackage{amsthm}
\usepackage{url}
\usepackage{color}
\usepackage{enumitem}
\newcommand{\mycitep}[2]{\citeauthor{#1}~(\citeyear{#1},~p.~#2)}

\newcommand{\myCite}[1]{\citeauthor{#1}~(\citeyear{#1})}
\newcommand{\mycite}[1]{\citeauthor{#1}~(\citeyear{#1})}
\newcommand{\citeI}[1]{\citeauthor{#1},~\citeyear{#1}}

\newcommand{\hyp}{$\text{H}_0$}

\newcommand{\dd}{\text{d}}
\newcommand{\bb}[1]{\boldsymbol{#1}}

\newcommand{\Var}{\text{Var}}
\newcommand{\ew}{\text{E}}

\newcommand{\R}{\mathbb{R}}
\newcommand{\mutilde}{\tilde{\mu}_{h,k}}
\newcommand{\nutilde}{\tilde{\nu}_{h,k}}
\newcommand{\muhat}{\hat{\mu}_{h,k}}
\newcommand{\nuhat}{\hat{\nu}_{h,k}}
\newcommand{\doublesumBlock}{\underset{i, j \, \in \, \mathcal{I}_{h,k}}{\; \sum \sum \;}}
\renewcommand{\P}{\text{P}}

\newtheorem{theorem}{Theorem}
\newtheorem{remark}[theorem]{Remark}
\newtheorem{proposition}[theorem]{Proposition}
\newtheorem{lemma}[theorem]{Lemma}
\newtheorem{assumption}{Assumption}

\journal{Journal of Multivariate Analysis}

\begin{document}

\begin{frontmatter}



\title{Detecting changes in the mean of spatial random fields on a regular grid} 


\author{Sheila Görz} 
\ead{sheila.goerz@tu-dortmund.de}
\author{Roland Fried}

\affiliation{organization={Department of Statistics, TU Dortmund University},
            addressline={Vogelpothsweg 87}, 
            city={Dortmund},
            postcode={44227}, 
            state={North Rhine-Westphalia},
            country={Germany}}

\begin{abstract}
We propose statistical procedures for detecting changes in the mean of spatial random fields observed on regular grids. The proposed framework provides a general approach to change detection in spatial processes. Extending a block-based method originally developed for time series, we introduce two test statistics, one based on Gini’s mean difference and a novel variance-based variant. Under mild moment conditions, we derive asymptotic normality of the variance-based statistic and prove its consistency against almost all non-constant mean functions (in a sense of positive Lebesgue measure). To accommodate spatial dependence, we {\color[RGB]{0, 0, 0} modify our procedures for $M$-dependent data and we} further develop a de-correlation algorithm based on estimated autocovariances. Monte Carlo simulations demonstrate that the tests maintain appropriate size and power for both independent and dependent data. In an application to satellite images, especially our variance-based test reliably detects regions undergoing deforestation.
\end{abstract}


\begin{keyword}
Change region detection \sep Spatial random fields \sep Satellite images 



\end{keyword}

\end{frontmatter}


\section{Introduction}

Abrupt changes in the structure of data can occur not only in time series but also in spatial random fields. Applications include satellite images to detect changes in nature, 
medical data 
and quality control. 
In all cases, it is important that structural breaks are recognized reliably. We consider random fields $(X_{\bb{i}})$, $\bb{1} \leq \bb{i} \leq \bb{n}, \, \bb{n} \in \mathbb{N}^d,$ stemming from {\color[RGB]{0, 0, 0} $d$-dimensional arrays} that follow the widely used signal-plus-noise model
\begin{align} \label{mod-sigNoise}
    X_{\bb{i}} = \mu\left(\frac{\bb{i}}{\bb{n}}\right) + Y_{\bb{i}},
\end{align}
{\color[RGB]{0, 0, 0} where $\frac{\bb{i}}{\bb{n}}$ is meant component-wise.} Here, $(Y_{\bb{i}})_{\bb{i} \in \mathbb{N}^d}$ is a stationary random field with $\ew(Y_{\bb{i}}) = 0$, $\Var(Y_{\bb{i}}) = \sigma^2$ and a continuous distribution. Our aim is to investigate whether the location function $\mu$ is constant or not. For time series, this is a well-studied problem; see e.g. \myCite{csorgo1997limit} for a survey. Many of these methods cannot be directly transferred to data over a higher-dimensional grid, as observations do not possess a natural order. Existing methods for spatial data often focus on one special type of change that should be detected. \myCite{fuentes2005formal} develops a method using spatial spectral analysis where she tests if there is interaction between space and frequency through a classical ANOVA. Changes in the mean cannot be detected with this method, and it lacks asymptotic results for a growing sample size. \mycite{gromenko2017detection} {\color[RGB]{0, 0, 0}investigate} changes in the mean of observations taken at multiple spatially correlated locations. In their work, the term change refers to a change over time rather than in space. \myCite{bucchia2014testing} tests for a change over a multi-dimensional "rectangle" by comparing the increase over a rectangle to the increase over the whole random field. This method needs maximization over all possible blocks and is computationally very intensive. \myCite{otto2016detection} and \myCite{kirch2025scan} develop methods for the detection of very specific change regions, the shape of which must be known in advance. The methods of Otto and Schmid can detect at most one change region, whereas the method of \myCite{kirch2025scan} can handle multiple change regions, but (asymptotic) critical values are not necessarily analytically known. \mycite{steland2025detection} addresses these issues by developing a test that employs Gumbel-type extreme value theory. However, the convergence of  such maximum-type statistics to their asymptotical distribution can be rather slow, so that large sample sizes are needed. Another method, proposed by \myCite{zhang2019spatial}, focuses more on precise localization of the change region(s), but it is only applicable for independent data and abrupt changes opposed to trends. By construction, the method does not give a sensible output if no change region is present. In an application to quality control of products, \mycite{jiang2005liquid} and \mycite{amirkhani2020novel} use ANOVA-based control charts to monitor product images. \mycite{okhrin2020new} and \mycite{okhrin2025monitoring} developed methods for monitoring changes in regular grids or images while using regions of interest for dimension reduction. These methods rely on reference images and require parameters to be estimated by a pre-run. Similarly, the technique of \mycite{mayrhofer2025robust} requires a pre-run for outlier detection in sequences of images. All the above monitoring procedures only allow for normally distributed errors and do not yield any asymptotics for growing images sizes, e.g. due to finer sampling. 

Our approach is based on the work of \myCite{schmidt2024detecting}, where blocks of a time series are compared to each other to find deviations in their location. In this study, we initially only consider the case where the $Y_i$ from model~(\ref{mod-sigNoise}) are iid. A possible non-stationarity in the mean, i.e. a change in location, is described by the function $\mu: [0, 1]^d \rightarrow \mathbb{R}.$ Our work extends Schmidt's test to two-dimensional data ($d = 2$), and a transfer to higher-dimensional data is straightforward. In extending the test method, we focus on the elementary scenario of independent, homoscedastic data. However, unlike the original test statistic, we consider more options for comparing the block means. 

The remainder of the paper is structured as follows. Section~\ref{sec-teststat} introduces our basic assumptions and the basic test statistic. Building on this, Section~\ref{sec-proof} introduces a variance-based variation to the test statistic from Section~\ref{sec-teststat}, and proves its convergence and consistency against almost all non-constant mean functions. Section~\ref{sec-decorr} discusses extensions to dependent data. To this end, we introduce {\color[RGB]{0, 0, 0} a modified test for $M$-dependent data and} a simple de-correlation algorithm {\color[RGB]{0, 0, 0} for the general case}. The results of a simulation study for both independent and dependent data are presented in Section~\ref{sec-sim}, and an application to satellite data of the Amazon rainforest is given in Section~\ref{sec-satellite}. Section~\ref{sec-summary} provides a summary and an outlook.

\section{Data situation and original test statistic} \label{sec-teststat}

We focus on the situation of 2-dimensional random fields {\color[RGB]{0, 0, 0} with independent variables,} observed on a regular grid, i.e., in our case $\bb{n} = (n, m)$. The data model then reads 
$$X_{i,j} = X_{i,j}^{(n,m)} = \mu\left(\frac{i}{n}, \frac{j}{m}\right) + Y_{i,j}, \quad i \in \{1, ..., n\}, \; j \in \{1, ..., m\},$$
where the observations $\left(X^{(n,m)}_{i,j}\right)$ stem from a double array {\color[RGB]{0, 0, 0}and the noise variables $(Y_{i,j})$ are i.i.d. with $\ew(Y_{i,j}) = 0$, $\Var(Y_{i,j}) = \sigma^2$}. We assume the following about the location function $\mu$:
\begin{assumption}\label{as-locFun}
    The location function $\mu: [0, 1]^2 \rightarrow \R$ is of the form: $\mu = \mu' + \sum_{i = 1}^K c_i \cdot \mathbbm{1}_{\mathcal{C}_i}$ where $\mu'$ is a continuous function. These sets $\mathcal{C}_i$ or their union $\mathcal{C} = \bigcup_{i = 1}^K \mathcal{C}_i$ will be called "change region(s)", their complement $\mathcal{B} := [0, 1]^2 \setminus \mathcal{C}$ will be called "base region". We assume
    \begin{enumerate}[label=(\alph*)]
        \item $\mathcal{C}_i \subset [0, 1]^2$, $i = 1, ..., K$, are finitely many disjoint Borel sets that contain the indices over which a location shift of magnitude $c_i$ occurs.
        \item All $\mathcal{C}_i$, as well as $\mathcal{B}$, have positive Lebesgue measure.
        \item The boundary of each $\mathcal{C}_i$ has a Lebesgue measure 0 and a finite length $\mathcal{L}_i$. As the number of change regions $K$ is finite, the total length of all boundaries $\mathcal{L}$ is also finite. 
     \end{enumerate}
\end{assumption}
Using this notation, we want to test the hypothesis pair
$$\mathbb{H}_0: \mu \text{ is constant} \quad \text{vs.} \quad \mathbb{H}_1: \mu \text{ is not constant}.$$
Note that it is possible to have no change regions $\mathcal{C}_i$ at all, even under the alternative. For example, if there is a trend present in $\mu$, the function can be continuous but not constant. {\color[RGB]{0, 0, 0}Under the hypothesis, it is trivial that there are no change regions.}
Adapting the test statistic of \myCite{schmidt2024detecting} to our spatial data setting leads us to two main components.\\
First, the given data is divided into $b_n \times b_m$ blocks of length $l_n \times l_m$. {\color[RGB]{0, 0, 0}We assume the following about the number of blocks and their length:
\begin{assumption} \label{as-blockSize}
    Each block has the dimension $l_n \times l_m$, $l_n = n^{s_1}, \; l_m = m^{s_2}$, $s_1, s_2 \in (0, 1)$. This results in $n^{1 - s_1} \times m^{1 - s_2} = b_n \times b_m$ blocks in total.
\end{assumption}}
Then, a statistic that represents the respective block adequately is taken. The most obvious choice for this is the arithmetic mean:
\begin{align*} 
    \hat{\mu}_{h,k} := \hat{\mu}_{h,k}^{(n,m)} := \frac{1}{l_n l_m} \sum_{i = (h - 1) l_n + 1}^{h l_n} \sum_{j = (k - 1) l_m + 1}^{k l_m} X_{i,j}, \quad h = 1, ..., b_n, \, k = 1, ..., b_m.
\end{align*}
A big advantage of the arithmetic mean is that its limit distribution is known and usually easy to work with. The block means form a double array
$$
 \left(\hat{\mu}_{h,k}^{(n,m)}\right) := \left(\hat{\mu}_{h,k}^{(n,m)}, \; h = 1, ..., b_n, \; k = 1, ..., b_m, \; n,m \in \mathbb{N}\right)
$$
since for increasing $n$ and $m$, the blocks may contain different $X_{i,j}$s. Nevertheless, the elements of the sequence $\left(\hat{\mu}_{h,k}^{(n,m)}\right)_{h,k}$ are independent for fixed $n, m$ {\color[RGB]{0, 0, 0} according to our  basic assumptions.}\\

In the second step, we apply a measure that is able to uncover possible differences between the block representatives. \myCite{schmidt2024detecting} uses \textbf{Gini's mean difference (GMD)}, which in our 2-dim. scenario reads as: 
\begin{align*} 
    U({\color[RGB]{0, 0, 0}n,m}) = \frac{1}{b_n b_m (b_n b_m - 1)} \sum_{h = 1}^{b_n} \sum_{k = 1}^{b_m} \sum_{h' = 1}^{b_n} \sum_{k' = 1}^{b_m} |\hat{\mu}_{h,k} - \hat{\mu}_{h',k'}|.
\end{align*}
Appropriately scaled, the following holds for the test statistic $U$ if $\hat{\mu}_{h,k}$ are the arithmetic block means:
\begin{theorem}
    Let $\ew(|Y_{1,1}|^{2 + \varepsilon}) < \infty$ for some $\varepsilon > 0 $ and $\color[RGB]{0, 0, 0}s_1, s_2 \in (0.5, 1)$. Then it holds under the assumption of a constant mean that
$${\color[RGB]{0, 0, 0}T^{GMD}_{n,m}}(\bb{X}) = \sqrt{b_n b_m}\left(\frac{\sqrt{l_n l_m}}{\hat{\sigma}} U({\color[RGB]{0, 0, 0}n,m}) - \frac{2}{\sqrt{\pi}}\right) \overset{\mathcal{D}}{\longrightarrow} \mathcal{N}\left(0, \frac{4}{3} + \frac{8}{\pi}\left(\sqrt{3} - 2\right)\right)$$
with $\hat{\sigma}$ being a {\color[RGB]{0, 0, 0}consistent} estimator for the standard {\color[RGB]{0, 0, 0}deviation }of $X_{i,j}^{(n,m)}$. 
\end{theorem}
Both {\color[RGB]{0, 0, 0}this asymptotics under the hypothesis} and the consistency of the test against almost all non-constant mean functions can be deduced from the proofs of \mycite{schmidt2024detecting} for the one-dimensional case. A similar test for constancy of the variance in time series was proposed by \mycite{schmidt2021asymptotic}. \\

As the arithmetic block means should asymptotically be normally distributed, regardless of the noise distribution, using the empirical variance instead of Gini's mean difference in the second step could be more efficient. Assuming the underlying data to be normally distributed, i.e. in our case $Y_{i,j} \sim \mathcal{N}(0, \sigma^2)$, and that there is a fixed number of blocks $2 \leq b_{(1)}, b_{(2)} < n,m$ in each dimension, this would result in an analysis of variance (ANOVA) test to check for variability between the blocks. But since, in the classical ANOVA, the number of blocks is fixed, there are location shifts that would not be detected if the number of blocks did not increase with growing sample size respectively finer sampling. The exemplary alternative in Figure~\ref{fig-alt5} demonstrates this problem graphically.\\
\begin{figure}[ht]
    \centering
    \includegraphics[width = 0.3\textwidth]{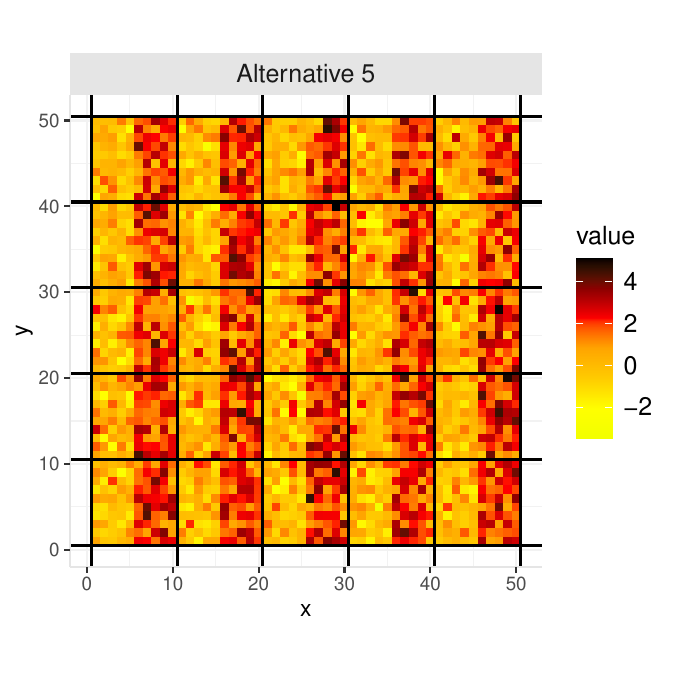}
    \caption{Example of an alternative that would not be detected if the number of blocks did not increase with growing sample size}
    \label{fig-alt5}
\end{figure}

Since, to the best of our knowledge, there are no asymptotics or modifications to cater non-normally distributed data or triangular arrays available for the ANOVA, in the next section we present a new, variance-based test statistic. We then prove its asymptotic convergence and its consistency against most alternatives. 

\section{A variance-based test} \label{sec-proof}

Instead of Gini's mean difference, we measure the variability between block means in a test based on the ideas of \myCite{schmidt2024detecting} by their variance:
$$\frac{1}{b_n b_m} \sum_{h = 1}^{b_n} \sum_{k = 1}^{b_m} (\hat{\mu}_{h,k} - \bar{X}_{n,m})^2 = \frac{1}{b_n b_m} \sum_{h = 1}^{b_n} \sum_{k = 1}^{b_m} \left(\hat{\mu}_{h,k} - \bar{\hat{\mu}}\right)^2,$$
{\color[RGB]{0, 0, 0}where $\bar{X}_{n,m} = \bar{X} = \frac{1}{nm} \sum_{i = 1}^n \sum_{j = 1}^m X_{i,j}$ is the arithmetic mean of all observations in our sample of size $nm$}. This corresponds to the numerator of an ANOVA. In order to apply a central limit theorem (CLT), we have to scale the statistic appropriately. In the following subsections, we first introduce the final test statistic, then follows the proof of its convergence under \hyp, starting with the notation used throughout the proof. Finally, we prove the consistency of the test.

\begin{theorem}\label{th-LyapunovCLTfull}
    Let {\color[RGB]{0, 0, 0}Assumption~\ref{as-blockSize}} be fulfilled and assume that $\ew(|Y_{i,j}|^{4 + \varepsilon}) < \infty$ for some $\varepsilon > 0$. Let $\hat{\sigma}$ be a (weakly) consistent estimator for $\sigma$. Then it holds under the hypothesis of a constant mean that
    \begin{align}\label{eq-LyapunovCLTfull}
    \begin{split}
    {\color[RGB]{0, 0, 0}T^{Var}_{n,m}}(X) = \frac{1}{\sqrt{2 b_n b_m}} \left[\frac{l_n l_m}{\hat{\sigma}^2} \left(\sum_{h = 1}^{b_n} \sum_{k = 1}^{b_m} \left(\hat{\mu}^{(n,m)}_{h,k}\right)^2 - b_n b_m \bar{X}_{n,m}^2 \right) - b_n b_m + 1\right] \overset{\mathcal{D}}{\longrightarrow}\mathcal{N}(0, 1) \quad   \text{ as } \, nm \rightarrow \infty.
    \end{split}
    \end{align}
\end{theorem}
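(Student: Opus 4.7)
The plan is to work under $\mathbb{H}_0$, where $\mu\equiv\mu_0$ is constant, so that the block means reduce to $\hat{\mu}_{hk}=\mu_0+\bar{Y}_{hk}$ with $\bar{Y}_{hk}=(l_nl_m)^{-1}\sum_{(i,j)\in\text{block}}Y_{ij}$, and the overall mean to $\bar{X}_{n,m}=\mu_0+\bar{Y}_{n,m}$. Because the blocks partition the observations, $\sum_{h,k}\bar{Y}_{hk}=b_nb_m\bar{Y}_{n,m}$, which algebraically cancels the $\mu_0$-contribution and gives
\[
\sum_{h,k}\hat{\mu}_{hk}^{2}-b_nb_m\bar{X}_{n,m}^{2}=\sum_{h,k}\bar{Y}_{hk}^{2}-b_nb_m\bar{Y}_{n,m}^{2}.
\]
Writing $L=l_nl_m$, $N=b_nb_m$, and $Z_{hk}:=L\bar{Y}_{hk}^{2}/\sigma^{2}-1$, the bracketed quantity in (\ref{eq-LyapunovCLTfull}) then equals $(\sigma^{2}/\hat{\sigma}^{2})\sum_{h,k}Z_{hk}+N(\sigma^{2}/\hat{\sigma}^{2}-1)-(LN\bar{Y}_{n,m}^{2}/\hat{\sigma}^{2})+1$ after collecting terms. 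I intend to show that the first term, divided by $\sqrt{2N}$, carries the asymptotic distribution, and that the other three are $o_{p}(1)$.

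The main step is a Lyapunov CLT for the triangular array $(Z_{hk})_{h,k}$. These are independent (distinct blocks use disjoint $Y_{ij}$'s) and centered. A direct fourth-moment expansion for iid mean-zero sums gives $E(\bar{Y}_{hk}^{4})=[L\mu_{4}+3L(L-1)\sigma^{4}]/L^{4}$, hence $\operatorname{Var}(Z_{hk})=2-3/L+\mu_{4}/(L\sigma^{4})\to 2$, so the normalising denominator $\sqrt{2N}$ agrees with the sum of variances $\sqrt{\sum_{h,k}\operatorname{Var}(Z_{hk})}$. For the Lyapunov condition with exponent $2+\delta$, $\delta:=\varepsilon/2$, I would apply Rosenthal's inequality to $\sum Y_{ij}$ with moment $p=4+2\delta\le 4+\varepsilon$, obtaining $E|\sum Y_{ij}|^{p}\le C_{p}\,L^{p/2}\sigma^{p}$ (the $L^{p/2}$-term dominates $L\,E|Y_{11}|^{p}$), and therefore $E|L\bar{Y}_{hk}^{2}/\sigma^{2}|^{2+\delta}=O(1)$ uniformly in $L$. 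Consequently $\sum_{h,k}E|Z_{hk}|^{2+\delta}=O(N)$ while $(2N)^{1+\delta/2}=N^{1+\delta/2}\cdot\text{const}$, so the Lyapunov ratio tends to $0$.

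To finish, I would collect the three remainder terms using Slutsky. The CLT for the overall mean gives $LN\bar{Y}_{n,m}^{2}/\sigma^{2}\stackrel{d}{\to}\chi_{1}^{2}$, so divided by $\sqrt{2N}$ it is $o_{p}(1)$; the constant $1/\sqrt{2N}$ is trivially negligible. For the factor $\sigma^{2}/\hat{\sigma}^{2}$ in front of $\sum Z_{hk}/\sqrt{2N}$ I use weak consistency of $\hat{\sigma}$ together with the continuous mapping theorem, and then Slutsky to conclude the limit is $\mathcal{N}(0,1)$. The main obstacle I foresee is the bias-type term $\sqrt{N/2}\,(\sigma^{2}/\hat{\sigma}^{2}-1)$, which requires that $\hat{\sigma}^{2}-\sigma^{2}=o_{p}(N^{-1/2})$; for the natural variance estimator based on all $nm=NL$ observations this holds automatically because its rate is $(NL)^{-1/2}=o(N^{-1/2})$ as $L\to\infty$, but the result as stated is cleanest if one either strengthens the assumption on $\hat\sigma$ or specifies the global sample variance, so I would flag this in a short remark accompanying the proof.
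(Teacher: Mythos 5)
Your proposal is correct and follows essentially the same architecture as the paper's proof: the same algebraic reduction under $\mathbb{H}_0$ to $\sum_{h,k}\hat{\nu}_{hk}^2 - b_nb_m\bar{Y}^2$, a Lyapunov CLT for the independent triangular array of squared block means, and Slutsky arguments for the centering term and the variance estimator. The differences are in the technical ingredients, and in each case your choice is at least as clean: you compute $\Var(l_nl_m\bar{Y}_{hk}^2/\sigma^2)=2-3/(l_nl_m)+\mu_4/(l_nl_m\sigma^4)\to 2$ exactly from the fourth-moment expansion, where the paper establishes $\kappa^{(4)}_{l_n,l_m}\to 3$ via a uniform-integrability argument (Brillinger plus Billingsley); you use Rosenthal's inequality where the paper invokes Brillinger's Theorem~2 (these deliver the same $L^{p/2}$ bound); and you dispose of the $\bar{Y}^2$ term via its $\chi^2_1$ limit where the paper uses Markov's inequality. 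Both the explicit variance formula and the uniform-integrability route require the same moment assumption, so nothing is gained or lost in generality.

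Your closing flag is a genuine and correct observation, not a mere stylistic remark. The bias term $\sqrt{b_nb_m}\,(\sigma^2/\hat{\sigma}^2-1)$ is not controlled by weak consistency of $\hat{\sigma}$ alone; one needs $\hat{\sigma}^2-\sigma^2=o_p\bigl((b_nb_m)^{-1/2}\bigr)$. The paper's own intermediate proposition in fact assumes the stronger rate $\hat{\sigma}^2-\sigma^2=\mathcal{O}_p\bigl((nm)^{-1/2}\bigr)$, which suffices since $b_nb_m=(nm)^{1-s}$ with $s>1/2$, and the accompanying remark verifies this rate for the global sample variance; but the hypothesis of Theorem~\ref{th-LyapunovCLTfull} as stated (``weakly consistent'') does not by itself justify this step. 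Your suggestion to either strengthen the assumption on $\hat{\sigma}$ or fix the estimator to be the global sample variance is exactly the right repair.
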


The resulting test for structural changes will be called \textbf{variance-based test} and abbreviated as \textbf{Var} in the following sections. {\color[RGB]{0, 0, 0}Note that opposed to the GMD test, we do not need to assume that $s_1, s_2 > 0.5$, but we require the existence of higher moments.}

\subsection{Notation}
Define 
\begin{align*}
    \mathcal{I}_{h,k} := \{(h - 1)l_n + 1, ..., hl_n\} \times \{(k-1)l_m + 1, ..., kl_m\}, \quad h = 1, ..., b_n; \, k = 1, ..., b_m,
\end{align*}
as the set of indices of observations that fall into block $(h, k)$. Denote with
$$\hat{\nu}_{h,k} := \hat{\nu}_{h,k}^{(n,m)} := \frac{1}{l_n l_m} \underset{i,j \, \in \, \mathcal{I}_{h,k}}{\,\sum \sum\,} Y_{i,j} \overset{\text{H}_0}{=} \hat{\mu}_{h,k} - \mu$$
the block means of the random errors $(Y_{i,j})$ resp. the block means $\hat{\mu}_{h,k}^{(n,m)}$ if the {\color[RGB]{0, 0, 0} mean function $\mu$ was constantly equal to 0}. Under \hyp, where the location function $\mu$ is constant, {\color[RGB]{0, 0, 0}$\mu(x,y)\equiv\mu$,} this is equal to the mean of the observations in the block minus {\color[RGB]{0, 0, 0}the constant mean $\mu$ }. A tilde instead of a hat on an estimator indicates that the value was multiplied with the square root of the number of elements summed up, i.e. $\tilde{\mu}_{h,k} = \sqrt{l_n l_m} \hat{\mu}_{h,k}$ and $\nutilde = \sqrt{l_n l_m} \hat{\nu}_{h,k}.$ 

\subsection{Motivation}

{\color[RGB]{0, 0, 0}In a situation with centered data with a known variance $\sigma^2$, we could make practical use of the following proposition, which will also be helpful otherwise:}
\begin{proposition}\label{prop-CLT}
    Let Assumption~\ref{as-blockSize} be fulfilled and assume that $\ew(|Y_{i,j}|^{2+\delta}) < \infty$ for some $\delta > 0$. Then it holds under the hypothesis
    $$\sqrt{l_n l_m} \,\frac{\nuhat^{(n,m)}}{\sigma} \; \overset{\mathcal{D}}{\longrightarrow} \;\mathcal{N}(0, 1) \quad \forall 1 \leq h \leq b_n, \, 1 \leq k \leq b_m.$$
\end{proposition}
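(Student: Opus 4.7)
\textbf{Proof plan for Proposition~\ref{prop-CLT}.}
The plan is to recognize that, under the hypothesis, $\hat{\nu}_{hk}$ is nothing more than a rescaled partial sum of independent, identically distributed centred random variables, so a direct application of the classical central limit theorem will suffice. First I would write, using Assumption~\ref{as-locFun} together with $\mathbb{H}_0$ (which makes $\mu$ a constant function),
\begin{align*}
\sqrt{l_n l_m}\,\frac{\hat{\nu}_{hk}^{(n,m)}}{\sigma}
= \frac{1}{\sigma\sqrt{l_n l_m}} \underset{i,j \, \in \, \mathcal{I}_{hk}}{\,\sum \sum\,} Y_{ij},
\end{align*}
so that what we have is the standardized arithmetic mean of $l_n l_m$ summands drawn from the iid field $(Y_{ij})$, each of which has mean zero and variance $\sigma^2$.

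Next I would verify that the number of summands tends to infinity: since $l_n = n^{s_1}$ and $l_m = m^{s_2}$ with $s_1,s_2 \in (0.5,1)$, we have $l_n l_m \to \infty$ as $N = nm \to \infty$, uniformly in $h$ and $k$. Because the $Y_{ij}$ are iid with finite variance $\sigma^2$, the Lindeberg--L\'evy CLT applies directly to the double sum, yielding the claimed convergence to $\mathcal{N}(0,1)$. The moment assumption $\ew(|Y_{ij}|^{2+\delta}) < \infty$ is in fact stronger than needed here; it is stated because the Lyapunov condition will play a substantive role in the subsequent proof of Theorem~\ref{th-LyapunovCLTfull}, where one must handle sums over a growing number of blocks rather than within a single block.

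There is no genuine obstacle at this stage: the only subtlety worth flagging is the triangular-array appearance of the indexing (the set $\mathcal{I}_{hk}$ depends on $n,m$), but because the $Y_{ij}$ are drawn from one and the same iid sequence, the distribution of each summand does not depend on $n,m$, and the situation reduces to the textbook iid CLT. If one preferred to work directly in a triangular-array framework, invoking Lyapunov's condition with the $(2+\delta)$-moment would give the same conclusion, since
\begin{align*}
\frac{1}{(l_n l_m)^{1+\delta/2}\sigma^{2+\delta}} \underset{i,j \, \in \, \mathcal{I}_{hk}}{\,\sum \sum\,} \ew\left(|Y_{ij}|^{2+\delta}\right)
= \frac{\ew(|Y_{11}|^{2+\delta})}{\sigma^{2+\delta}\,(l_n l_m)^{\delta/2}} \longrightarrow 0.
\end{align*}
Either route closes the proof immediately.
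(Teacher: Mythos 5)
Your proposal is correct, and your fallback route is in fact exactly the paper's proof: the paper treats $\left(Y^{(n,m)}_{ij}\right)$ as a double array, computes $\Var\left(S_{hk}^{(n,m)}\right) = l_n l_m \sigma^2$, and verifies Lyapunov's condition with the $(2+\delta)$-th moments, arriving at the same $\mathcal{O}\left((l_n l_m)^{-\delta/2}\right)$ bound you wrote down. Your primary route is slightly more economical: you observe that because the $Y_{ij}$ are genuinely iid from a single sequence, the standardized block sum has the same distribution as a standardized partial sum of $l_n l_m$ iid terms, so the Lindeberg--L\'evy CLT with finite variance alone already gives the conclusion, and the $(2+\delta)$-moment assumption is not needed for this proposition. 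That observation is valid and correctly identifies that the triangular-array indexing here is only superficial; the paper instead keeps the double-array framing (consistent with its citation of the CLT for double arrays) and pays the small price of invoking the higher moment. Either way the argument closes, and your Lyapunov computation matches the paper's line for line.
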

This convergence holds according to the central limit theorem for double arrays, see \mycitep{serfling}{31-32}. A more detailed proof is given in the appendix.

\begin{remark}
    According to the Continuous Mapping Theorem, 
    $$l_n l_m \frac{\nuhat^2}{\sigma^2} \overset{\mathcal{D}}{\longrightarrow} \mathcal{N}(0, 1)^2 = \chi^2_1 \quad \text{ and } \quad \frac{l_n l_m}{\sigma^2} \sum_{h = 1}^{a} \sum_{k = 1}^{b} \nuhat^2 \overset{\mathcal{D}}{\longrightarrow} \chi^2_{ab}$$
    for any $a \in \{1, ..., b_n\}$ and $b \in \{1, ..., b_m\}$. 
\end{remark}

\subsection{Proof of Theorem~\ref{th-LyapunovCLTfull}}

We are in the situation that the hypothesis holds and therefore the location function $\mu$ is constant. Since we cannot assume $\mu$ to be constantly equal to 0, we need to center the block means to derive a sensible CLT. Consider therefore the block means centered by the overall arithmetic mean:    
    \begin{align*}
        \sqrt{l_n l_m} \left(\muhat^{(n,m)} - \bar{X}\right) 
        &= \sqrt{l_n l_m} \left[ \frac{1}{l_n l_m} \doublesumBlock X_{i,j} - \bar{X}\right] \\
        &= \sqrt{l_n l_m} \left[ \frac{1}{l_n l_m} \doublesumBlock (\mu + Y_{i,j}) - (\mu + \bar{Y}) \right] \\
        &= \sqrt{l_n l_m} \left[ \frac{1}{l_n l_m} \doublesumBlock Y_{i,j} - \bar{Y}\right]\\
        &= \sqrt{l_n l_m}\left(\nuhat^{(n,m)} - \bar{Y}\right) \\
        &= \nutilde^{(n,m)} - \sqrt{l_n l_m}\bar{Y}, \quad h = 1, ..., b_n, \, k = 1, ..., b_m.
    \end{align*}
    Analogously, for the weighted sum of the squared values we can write 
    \begin{align*}
        \frac{l_n l_m}{\sqrt{b_n b_m}} \sum_{h = 1}^{b_n} \sum_{k = 1}^{b_m} \left(\muhat - \bar{X}\right)^2
        &= \frac{l_n l_m}{\sqrt{b_n b_m}} \sum_{h = 1}^{b_n} \sum_{k = 1}^{b_m} \left(\nuhat - \bar{Y}\right)^2 \\
        &= \frac{l_n l_m}{\sqrt{b_n b_m}} \left[ \sum_{h = 1}^{b_n} \sum_{k = 1}^{b_m} \nuhat^2 - b_n b_m \bar{Y}^2\right]\\
        &= \frac{1}{\sqrt{b_n b_m}} \left[\sum_{h = 1}^{b_n} \sum_{k = 1}^{b_m} \nutilde^2 - nm \bar{Y}^2 \right]
    \end{align*}
    since $\bar{Y} = \frac{1}{b_n b_m} \sum_{h = 1}^{b_n} \sum_{k = 1}^{b_m} \nuhat$. For the overall mean, it holds $\ew(\bar{Y}^2) = \frac{\sigma^2}{nm}$.

We can rewrite term~(\ref{eq-LyapunovCLTfull}) in Theorem~\ref{th-LyapunovCLTfull} as
\begin{align}
    (\ref{eq-LyapunovCLTfull}) =\;& \frac{1}{\sqrt{2}} \left(\frac{l_n l_m}{\hat{\sigma}^2 \sqrt{b_n b_m}} \sum_{h = 1}^{b_n} \sum_{k = 1}^{b_m} \nuhat^2 - \sqrt{b_n b_m} \right) \label{eq-LyapunovCLT(1)}\\
    &- \frac{1}{\sqrt{2}}\left(\frac{l_n l_m\sqrt{b_n b_m}}{\hat{\sigma}^2} \bar{Y}^2 - \frac{1}{\sqrt{b_n b_m}}\right). \label{eq-LyapunovCLT(2)}
\end{align}
The first term of the above difference contains the variance estimation on the block means and we can show that the Central Limit Theorem holds. The second term contains the centering by the arithmetic mean and we will show that it is asymptotically negligible. Before that, we introduce the following lemma which we will use later on.

\begin{lemma}\label{lemm-4thM}
Let $\ew(|Y_{1,1}|^{4 + \delta}) < \infty$ for $\delta > 0$. Denote with $\kappa_{n,m}^{(4)}$ the fourth standardized moment, {\color[RGB]{0, 0, 0}i.e. the kurtosis}, of the distribution of $\frac{1}{\sqrt{nm}} \sum_{i = 1}^n \sum_{j = 1}^m Y_{i,j}$. Then
    \begin{align*}
        \kappa_{n,m}^{(4)} = \ew\left[\frac{\left(\sum_{i = 1}^n \sum_{j = 1}^m Y_{i,j}\right)^4}{(nm)^2 \sigma^4}\right] = \ew\left[\frac{(nm)^2 \bar{Y}_{n,m}^4}{\sigma^4}\right] \longrightarrow 3 \quad \text{as } nm \rightarrow \infty.
    \end{align*}
\end{lemma}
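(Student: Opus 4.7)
The plan is to prove Lemma \ref{lemm-4thM} by direct multinomial expansion of $E\bigl[(\sum_{i,j} Y_{ij})^4\bigr]$ and exploitation of independence together with $E(Y_{ij}) = 0$. Re-index the sum by a single index $a = 1,\dots,N$ running over all pairs $(i,j)$, so that we need to evaluate
\begin{equation*}
 E\Bigl[\Bigl(\sum_{a=1}^{N} Y_a\Bigr)^{4}\Bigr] \;=\; \sum_{a,b,c,d = 1}^{N} E\bigl(Y_a Y_b Y_c Y_d\bigr).
\end{equation*}
By independence, $E(Y_a Y_b Y_c Y_d)$ factorises according to the partition of $\{a,b,c,d\}$ into blocks of equal indices, and any factor in which a $Y$ variable appears exactly once contributes $E(Y_{1,1}) = 0$.

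The next step is to enumerate the index patterns that survive. Only two patterns give non-zero contributions: (i) all four indices coincide, giving $N$ tuples each contributing $E(Y_{1,1}^{4})$; and (ii) the indices split into two pairs of two equal values with the two pair-values distinct, for which there are $3$ ways of pairing the four positions and $N(N-1)$ ordered choices of the two distinct index values, each contributing $\sigma^{4}$. All remaining patterns contain an index of multiplicity one or three and hence vanish because $E(Y_{1,1}) = E(Y_{1,1}^{3}\cdot 1) \cdot 0 = 0$ (they factor off at least one mean-zero factor). This yields
\begin{equation*}
 E\Bigl[\Bigl(\sum_{i,j} Y_{ij}\Bigr)^{4}\Bigr] \;=\; N\, E(Y_{1,1}^{4}) \;+\; 3\,N(N-1)\,\sigma^{4}.
\end{equation*}

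Dividing by $N^{2}\sigma^{4}$ gives
\begin{equation*}
 \kappa_{n,m}^{(4)} \;=\; \frac{E(Y_{1,1}^{4})}{N\,\sigma^{4}} \;+\; 3\cdot\frac{N-1}{N} \;\longrightarrow\; 0 + 3 \;=\; 3 \qquad \text{as } N = nm \to \infty,
\end{equation*}
where the first term vanishes because the moment assumption $E(|Y_{1,1}|^{4+\delta}) < \infty$ in particular guarantees $E(Y_{1,1}^{4}) < \infty$. The second equality in the statement of the lemma, $N^{-2}E[(\sum Y_{ij})^{4}] = E[N^{2}\bar{Y}_{n,m}^{4}]$, is just the definition $\bar{Y}_{n,m} = N^{-1}\sum_{i,j} Y_{ij}$ and requires no separate argument.

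There is no real obstacle here; the only point to double-check is the combinatorial count in step two, namely that the three surviving pairings $\{\{a,b\},\{c,d\}\}$, $\{\{a,c\},\{b,d\}\}$, $\{\{a,d\},\{b,c\}\}$ together with $N(N-1)$ ordered pairs of distinct index values give exactly the $3N(N-1)\sigma^{4}$ contribution, and to verify that the fourth-moment assumption is indeed what is needed to bound the first term (a slightly weaker $L^{4}$ assumption would in fact suffice for the lemma itself, but the stronger $4+\delta$ moment is presumably invoked later in the Lyapunov argument for Theorem \ref{th-LyapunovCLTfull}).
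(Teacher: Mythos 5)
Your proof is correct, but it takes a genuinely different route from the paper. You compute $\ew\bigl[(\sum_{i,j} Y_{ij})^4\bigr]$ exactly by multinomial expansion: the only surviving index patterns are the diagonal ($N$ terms contributing $\ew(Y_{1,1}^4)$) and the two-pair patterns ($3N(N-1)$ terms contributing $\sigma^4$), giving $\kappa_{n,m}^{(4)} = \ew(Y_{1,1}^4)/(N\sigma^4) + 3(N-1)/N \to 3$; your combinatorial count is right ($\binom{4}{2}=6$ tuples per unordered pair of distinct values, times $\binom{N}{2}$ pairs, equals $3N(N-1)$). The paper instead argues via weak convergence: it invokes the CLT for double arrays (its Proposition~\ref{prop-CLT}) and the Continuous Mapping Theorem to get $N^2\bar{Y}_{n,m}^4/\sigma^4 \overset{\mathcal{D}}{\to} Z^4$ with $Z\sim\mathcal{N}(0,1)$, then establishes uniform integrability of the family using Theorem~2 of Brillinger to bound a $(1+\delta)$-th moment, and concludes convergence of expectations to $\ew(Z^4)=3$ by Billingsley's Theorem~5.4. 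Your argument is more elementary and yields the exact finite-sample value of $\kappa_{n,m}^{(4)}$ (with an explicit $O(1/N)$ rate), and, as you note, it needs only $\ew(Y_{1,1}^4)<\infty$ rather than the $(4+\delta)$-th moment; the stronger moment assumption is indeed what the paper needs elsewhere, both for its uniform-integrability step here and for the Lyapunov condition in Theorem~\ref{th-LyapunovCLT-withoutCentering}. The paper's softer argument has the advantage of transferring to other continuous functionals of $\sqrt{N}\bar{Y}_{n,m}$ without new combinatorics, but for this particular lemma your direct computation is a complete and arguably cleaner proof.
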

The detailed proof is given in the appendix.

\begin{theorem}\label{th-LyapunovCLT-withoutCentering}
    Let {\color[RGB]{0, 0, 0}Assumption~\ref{as-blockSize}} be fulfilled and assume that $\ew(|Y_{i,j}|^{4 + \varepsilon}) < \infty \; \forall 1 \leq i \leq n, \, 1 \leq j \leq m$ and for $\varepsilon > 0$. Then it holds under the hypothesis
    $$\frac{1}{\sqrt{\kappa_{l_n, l_m}^{(4)} - 1}}\left[\frac{1}{\sigma^2\sqrt{b_n b_m}} \sum_{h = 1}^{b_n} \sum_{k = 1}^{b_m} \left(\nutilde^{(n,m)}\right)^2 - \sqrt{b_n b_m}\right] \overset{\mathcal{D}}{\longrightarrow}\mathcal{N}(0, 1) \, \text{ as } \, nm \rightarrow \infty$$
\end{theorem}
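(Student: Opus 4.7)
The plan is to view $Z_{hk}^{(n,m)} := (\tilde{\nu}_{hk}^{(n,m)})^2/\sigma^2 = l_n l_m\, \hat{\nu}_{hk}^2/\sigma^2$ as the summands of a triangular array of i.i.d.\ random variables and apply a Lyapunov central limit theorem. Since the block index sets $\mathcal{I}_{hk}$ are pairwise disjoint and the $Y_{ij}$ are i.i.d., the $Z_{hk}^{(n,m)}$ are themselves i.i.d.\ for each fixed $(n,m)$. A direct computation yields $\ew[Z_{hk}^{(n,m)}] = 1$, and reading off the definition of $\kappa_{l_n,l_m}^{(4)}$ from Lemma~\ref{lemm-4thM} gives $\Var(Z_{hk}^{(n,m)}) = \ew[(Z_{hk}^{(n,m)})^2] - 1 = \kappa_{l_n,l_m}^{(4)} - 1$. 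Algebraically, $\sum_{h,k}(Z_{hk}^{(n,m)} - 1) = \sigma^{-2}\sum_{h,k}(\tilde{\nu}_{hk}^{(n,m)})^2 - b_n b_m$, so the statistic in the theorem is exactly the standardized sum of the $b_n b_m$ centered i.i.d.\ variables $Z_{hk}^{(n,m)} - 1$. Because $\kappa_{l_n,l_m}^{(4)} - 1 \to 2$ by Lemma~\ref{lemm-4thM}, the normalizing variance stays bounded away from $0$ and $\infty$, so a Lyapunov CLT will deliver the claimed $\mathcal{N}(0,1)$ limit.

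To verify the Lyapunov condition of order $2+\delta$ with $\delta\in(0,\varepsilon/2]$, identical distribution collapses the criterion to showing
$$\frac{\ew\lvert Z_{11}^{(n,m)}-1\rvert^{2+\delta}}{(b_n b_m)^{\delta/2}\,(\kappa_{l_n,l_m}^{(4)}-1)^{1+\delta/2}} \longrightarrow 0.$$
The second factor in the denominator tends to $2^{1+\delta/2}>0$, while $(b_n b_m)^{\delta/2}\to\infty$ because $s_1,s_2<1$. By the $c_r$-inequality it then suffices to bound the numerator uniformly in $(n,m)$, i.e.\ to bound $\ew\bigl|\sqrt{l_n l_m}\,\hat{\nu}_{11}/\sigma\bigr|^{4+2\delta}$. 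Since $4+2\delta\le 4+\varepsilon$, this is precisely the uniform moment estimate supplied by Theorem~2 of \myCite{brillinger1962note}, already used in the proof of Lemma~\ref{lemm-4thM}.

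The main obstacle is conceptual rather than computational: both the law of the summands and the normalizing variance depend on $(n,m)$, placing us genuinely in the triangular-array regime. Lemma~\ref{lemm-4thM} handles the convergence of the variance, and Brillinger's inequality controls the Lyapunov numerator --- the latter being available precisely because the moment assumption has been strengthened from $2+\varepsilon$ (cf.\ Proposition~\ref{prop-CLT}) to $4+\varepsilon$. Without this strengthening, the Lyapunov step would collapse.
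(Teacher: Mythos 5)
Your proposal is correct and follows essentially the same route as the paper: both arguments treat the squared, scaled block means as an i.i.d.\ triangular array with mean $\sigma^2$ (resp.\ $1$ after your normalization) and variance $\sigma^4(\kappa_{l_n,l_m}^{(4)}-1)$, verify Lyapunov's condition via the $c_r$-inequality together with Theorem~2 of \myCite{brillinger1962note}, and use Lemma~\ref{lemm-4thM} to keep the normalizing variance bounded away from zero. The only differences are cosmetic (pre-dividing by $\sigma^2$ and collapsing the Lyapunov sum by identical distribution rather than carrying the double sum).
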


\begin{proof}
    $$\left(\nuhat^{(n,m)}\right) = \left(\nuhat^{(n,m)}, \, h = 1, ..., b_n, \, k = 1, ..., b_m, \, n,m \in \mathbb{N}\right)$$ is a double array, so $\left(\left(\nuhat^{(n,m)}\right)^2\right)$ is one, too. We notice that since the $Y_{i,j}$s are independent, the weighted, centered and squared non-overlapping block means $\left(\nuhat^2\right)$ are independent as well. Following from the independence of the $Y_{i,j}$, we get that
    $$\ew\left(l_n l_m \nuhat^2\right) = \ew\left(\nutilde^2\right) = \Var(\nutilde) + \underbrace{\ew(\nutilde)^2}_{= 0}\\
    = \frac{1}{l_n l_m} \doublesumBlock \Var(Y_{i,j}) = \sigma^2$$
    and 
    \begin{align*}
        \Var(l_n l_m \nuhat^2) &= \ew(\nutilde^4) - \ew(\nutilde^2)^2 =   \ew\left(\left(\frac{1}{\sqrt{l_n l_m}} \doublesumBlock Y_{i,j}\right)^4\right) - \sigma^4 = \sigma^4 (\kappa_{l_n, l_m}^{(4)} - 1) 
    \end{align*}
    where $\kappa_{l_n, l_m}^{(4)}$ is the fourth standardized moment of the distribution of $\nutilde^4$.
    {\color[RGB]{0, 0, 0} Using the inequality of \citet{marcinkiewicz1937fonctions}, see Theorem~2 in \citet{chow_teicher_1997}, Section~10.3}, we can show that the variance given in the last equation is indeed bounded. According to the Marcinkiewicz-Zygmund (M-Z) inequality, 
    $$\Var(\nutilde^2) = \sigma^4 (\kappa_{l_n, l_m}^{(4)} - 1) = \frac{1}{(l_n l_m)^2} \ew\left[ \left( \doublesumBlock Y_{i,j}\right)^4\right] - \sigma^4 < C - \sigma^4 < \infty$$
    for all $h,k$, for a $C > 0$ as long as the $Y_{i,j}$ are i.i.d. and $\ew(|Y_{1,1}|^4) < \infty$, as assumed. \\
    A detailed verification of Lyapunov's condition is given in the appendix. According to the CLT for triangular arrays, Theorem~\ref{th-LyapunovCLT-withoutCentering} holds.
    \end{proof}

As the following proposition shows, the terms $\kappa^{(4)}_{l_n, l_m}$ and $\sigma^2$ in Theorem~\ref{th-LyapunovCLT-withoutCentering} can be replaced by the corresponding limit value resp. estimator.

\begin{proposition} \label{prop-estimateVariance}
Let {\color[RGB]{0, 0, 0}Assumption~\ref{as-blockSize}} be fulfilled and let $\hat{\sigma}^2$ be a consistent estimator for $\sigma^2$ that converges in $\mathcal{O}\left(\frac{1}{\sqrt{nm}}\right)$. Then it holds under the hypothesis
    $$\frac{1}{\sqrt{2}} \left(\frac{l_n l_m}{\hat{\sigma}^2 \sqrt{b_n b_m}} \sum_{h = 1}^{b_n} \sum_{k = 1}^{b_m} \nuhat^2 - \sqrt{b_n b_m} \right) \overset{\mathcal{D}}{\longrightarrow} \mathcal{N}(0, 1) \, \text{ as } \, nm \rightarrow \infty$$
\end{proposition}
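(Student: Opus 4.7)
The plan is to derive the proposition from Theorem~\ref{th-LyapunovCLT-withoutCentering} by two successive applications of Slutsky's theorem: first replacing the deterministic normalization $\sqrt{\kappa_{l_n,l_m}^{(4)}-1}$ by its limit $\sqrt{2}$, and then replacing the unknown $\sigma^2$ by the consistent estimator $\hat{\sigma}^2$. Using the identity $\tilde{\nu}_{hk}^2=l_n l_m\,\hat{\nu}_{hk}^2$, the conclusion of Theorem~\ref{th-LyapunovCLT-withoutCentering} can be rewritten directly as
\begin{align*}
\frac{1}{\sqrt{\kappa_{l_n,l_m}^{(4)}-1}}\left[\frac{l_n l_m}{\sigma^2\sqrt{b_n b_m}}\sum_{h=1}^{b_n}\sum_{k=1}^{b_m}\hat{\nu}_{hk}^2-\sqrt{b_n b_m}\right]\overset{\mathcal{D}}{\longrightarrow}\mathcal{N}(0,1).
\end{align*}
Since $l_n l_m\to\infty$, Lemma~\ref{lemm-4thM} yields $\kappa_{l_n,l_m}^{(4)}\to 3$, hence $\sqrt{\kappa_{l_n,l_m}^{(4)}-1}\to\sqrt{2}$; multiplying through by this converging deterministic scalar (Slutsky) gives the same convergence with $\sqrt{2}$ in the denominator.

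For the second step, set $T_{n,m}:=\frac{l_n l_m}{\sqrt{b_n b_m}}\sum_{h,k}\hat{\nu}_{hk}^2$ and write the discrepancy between the two studentizations as
\begin{align*}
\frac{1}{\sqrt{2}}\left(\frac{T_{n,m}}{\hat{\sigma}^2}-\frac{T_{n,m}}{\sigma^2}\right)=\frac{1}{\sqrt{2}}\cdot\frac{\sigma^2-\hat{\sigma}^2}{\sigma^2\hat{\sigma}^2}\cdot T_{n,m}.
\end{align*}
From the CLT of the first step, $T_{n,m}=\sigma^2\sqrt{b_n b_m}+O_P(1)$, and from weak consistency $1/\hat{\sigma}^2=O_P(1)$. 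Combined with the assumed rate $\hat{\sigma}^2-\sigma^2=\mathcal{O}_P(N^{-1/2})$, the remainder is of order $\mathcal{O}_P(\sqrt{b_n b_m}/\sqrt{N})=\mathcal{O}_P(1/\sqrt{l_n l_m})$, using the identity $b_n b_m\cdot l_n l_m=N$. Since $l_n l_m\to\infty$ for $s_1,s_2\in(0.5,1)$, this term is $o_P(1)$, and a final application of Slutsky's theorem yields the claim.

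The only delicate point is the rate bookkeeping in the last step: the $\sigma^2$-replacement error is amplified by the $\mathcal{O}_P(\sqrt{b_n b_m})$ magnitude of $T_{n,m}$, and it is precisely the identity $b_n b_m\cdot l_n l_m=N$ that causes the $N^{-1/2}$ convergence of $\hat{\sigma}^2$ to be fast enough to swallow this inflation. I do not expect any further subtleties beyond routine Slutsky-type arguments.
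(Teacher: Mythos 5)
Your proposal is correct and follows essentially the same route as the paper: both arguments invoke Theorem~\ref{th-LyapunovCLT-withoutCentering} together with Lemma~\ref{lemm-4thM} to replace $\sqrt{\kappa^{(4)}_{l_n,l_m}-1}$ by $\sqrt{2}$ via Slutsky, and then show that swapping $\sigma^2$ for $\hat{\sigma}^2$ contributes only a term of order $\sqrt{b_n b_m}\,(\hat{\sigma}^2-\sigma^2)=\mathcal{O}_P\bigl(\sqrt{b_n b_m/N}\bigr)=o_P(1)$. Your bookkeeping via $b_n b_m\cdot l_n l_m=N$ is just a rewriting of the paper's $b_n b_m=N^{1-s}$ argument, so the two proofs differ only in algebraic presentation.
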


\begin{proof}
\begin{align*}
    &\frac{1}{\sqrt{2}} \left(\frac{l_n l_m}{\hat{\sigma}^2 \sqrt{b_n b_m}} \sum_{h = 1}^{b_n} \sum_{k = 1}^{b_m} \nuhat^2 - \sqrt{b_n b_m} \right) \\
    &= \underbrace{\frac{1}{\sqrt{\kappa^{(4)}_{l_n,l_m} - 1}} \left(\frac{l_n l_m}{\sigma^2 \sqrt{b_n b_m}} \sum_{h = 1}^{b_n} \sum_{k = 1}^{b_m} \nuhat^2 - \sqrt{b_n b_m} \right)}_{\overset{\mathcal{D}}{\longrightarrow} \; \mathcal{N}(0, 1) \; \text{according to Theorem~\ref{th-LyapunovCLT-withoutCentering}}} \frac{\sigma^2}{\hat{\sigma}^2 } \sqrt{\frac{\kappa^{(4)}_{l_n,l_m} - 1}{2}} + \sqrt{\frac{b_n b_m}{2}}\left(\frac{\sigma^2}{\hat{\sigma}^2 } - 1\right)
\end{align*}
    $\frac{\hat{\sigma}^2 }{\sigma^2} \rightarrow 1$ since $\hat{\sigma}^2$ is a consistent estimator for $\sigma^2$ and as per the CMT, $\frac{\sigma^2}{\hat{\sigma}^2 } \rightarrow 1$, too. $\sqrt{\frac{\kappa^{(4)}_{l_n,l_m} - 1}{2}} \rightarrow \sqrt{\frac{2}{2}} = 1$ holds according to Lemma~\ref{lemm-4thM}. Therefore, by  application of Slutzky's lemma, we get the convergence in distribution of the first summand to a standard normal distribution. \\
    For the second summand, we know that $\sigma^2 - \hat{\sigma}^2$ converges to 0 in $\mathcal{O}\left(\frac{1}{\sqrt{nm}}\right)$, hence
    $$\sqrt{\frac{b_n b_m}{2}} \frac{\sigma^2 - \hat{\sigma}^2}{\sigma^2} = \sqrt{\frac{N^{1 - s}}{2}} \frac{\sigma^2 - \hat{\sigma}^2}{\sigma^2} \, \longrightarrow 0 \quad (nm \rightarrow \infty)$$
    as long as $s < 1$. And with the same argument as before, 
    $$\sqrt{\frac{b_n b_m}{2}} \frac{\sigma^2 - \hat{\sigma}^2}{\sigma^2} \frac{\sigma^2}{\hat{\sigma}^2} = \sqrt{\frac{b_n b_m}{2}} \left(\frac{\sigma^2}{\hat{\sigma}^2} - 1 \right) \; \longrightarrow 0 \quad(nm \rightarrow \infty).$$
    Another application of Slutzky's lemma proves the proposition.
\end{proof}

\begin{remark}
    The empirical variance $\frac{1}{n-1}\sum_{i = 1}^n\sum_{j = 1}^m (X_{i,j} - \bar{X})^2$ over all observations of the random field is a consistent estimator for $\sigma^2$ under \hyp. As long as $\ew(|Y|^4) < \infty$, which is assumed throughout this paper, the empirical variance converges to $\sigma^2$ in $\mathcal{O}\left(\frac{1}{{\color[RGB]{0, 0, 0}\sqrt{nm}}}\right)$ (\citeI{serfling}, p. 192).
\end{remark}

\begin{proposition}\label{prop-expMeanconvergence}
    Let $\hat{\sigma}^2$ be a (weakly) consistent estimator for $\sigma^2$. Then
    $${\color[RGB]{0, 0, 0}\frac{l_n l_m \sqrt{b_n b_m}}{\hat{\sigma}^2} \bar{Y}^2} \overset{P}{\longrightarrow} 0 \quad \text{as } {\color[RGB]{0, 0, 0}nm} \rightarrow \infty.$$
\end{proposition}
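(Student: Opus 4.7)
The plan is to exploit the identity $l_n l_m \, b_n b_m = nm = N$, which allows the factor in front of $\bar Y^2$ to be rewritten as
\[
l_n l_m \sqrt{b_n b_m} \;=\; \frac{N}{\sqrt{b_n b_m}} ,
\]
so that the quantity under study becomes
\[
\frac{1}{\sqrt{2}}\left(\frac{1}{\sqrt{b_n b_m}}\cdot\frac{N\bar Y^2}{\hat\sigma^2} - \frac{1}{\sqrt{b_n b_m}}\right).
\]
The key observation is that both summands carry the deterministic factor $1/\sqrt{b_n b_m}$, which tends to $0$ since $b_n b_m = N^{1-s_1\wedge s_2}\to\infty$ (using $s_1,s_2\in(0.5,1)$). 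It therefore suffices to show that $N\bar Y^2/\hat\sigma^2$ is stochastically bounded.

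First I would apply Proposition~\ref{prop-CLT} with $b_n=b_m=1$ (i.e.\ one block equal to the whole field) to obtain $\sqrt{N}\,\bar Y/\sigma \overset{\mathcal{D}}{\to} \mathcal{N}(0,1)$. The Continuous Mapping Theorem then yields $N\bar Y^2/\sigma^2 \overset{\mathcal{D}}{\to} \chi^2_1$, which implies $N\bar Y^2/\sigma^2 = \mathcal{O}_P(1)$ by Prokhorov's theorem (tightness of convergent sequences). Next, writing
\[
\frac{N\bar Y^2}{\hat\sigma^2} \;=\; \frac{N\bar Y^2}{\sigma^2}\cdot \frac{\sigma^2}{\hat\sigma^2},
\]
I would use the weak consistency of $\hat\sigma^2$ together with the Continuous Mapping Theorem to get $\sigma^2/\hat\sigma^2 \overset{P}{\to} 1$, and then Slutsky's lemma to conclude that the product is still $\mathcal{O}_P(1)$.

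Combining these pieces, $\frac{1}{\sqrt{b_n b_m}}\cdot \frac{N\bar Y^2}{\hat\sigma^2} = \mathcal{O}_P\!\bigl(1/\sqrt{b_n b_m}\bigr) = o_P(1)$, and the second summand $1/\sqrt{b_n b_m}$ is purely deterministic and vanishes. Subtracting gives the claim.

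There is no real obstacle in this proof: the statement is essentially a bookkeeping exercise once one notices the telescoping factor $N=l_n l_m b_n b_m$ and uses that $\bar Y$ already enjoys the usual $\sqrt{N}$-rate by the central limit theorem for double arrays. The only point that requires a small amount of care is to make sure that the moment condition of Proposition~\ref{prop-CLT} is satisfied; but under the standing assumption $\ew(|Y_{ij}|^{4+\varepsilon})<\infty$ of Theorem~\ref{th-LyapunovCLTfull}, the weaker $(2+\delta)$-moment needed here is automatic.
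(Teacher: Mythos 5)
Your proof is correct, but it takes a different route from the paper. The paper proves the claim via Markov's inequality applied to the first absolute moment: it bounds the tail probability by $\tfrac{l_n l_m\sqrt{b_n b_m}}{\varepsilon\sqrt{2}\,\hat\sigma^2}\,\ew(\bar Y^2)+\tfrac{1}{\varepsilon\sqrt{b_n b_m}}$ and uses $\ew(\bar Y^2)=\sigma^2/(nm)$ together with $l_n l_m b_n b_m=nm$, so that everything collapses to terms of order $1/\sqrt{b_n b_m}$. That argument needs only second moments and no limit theorem, but it informally pulls the random denominator $\hat\sigma^2$ outside the expectation as if it were a constant. Your argument instead establishes $N\bar Y^2/\sigma^2=\mathcal{O}_P(1)$ via the CLT and the Continuous Mapping Theorem, handles $\sigma^2/\hat\sigma^2\overset{P}{\to}1$ by Slutsky, and then multiplies by the deterministic factor $1/\sqrt{b_n b_m}\to 0$. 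This is slightly heavier machinery than strictly necessary (Markov or Chebyshev applied to $N\bar Y^2$, whose mean is exactly $\sigma^2$, would give the same $\mathcal{O}_P(1)$ bound without invoking any CLT), but it treats the random $\hat\sigma^2$ rigorously, which is a genuine advantage over the paper's version. One minor imprecision: $b_n b_m$ is of order $n^{1-s_1}m^{1-s_2}$ rather than $N^{1-s_1\wedge s_2}$, but all that matters is that it diverges, which holds since $s_1,s_2<1$.
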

\begin{proof}
{\color[RGB]{0, 0, 0}Re-writing the term leads to 
$$\frac{l_n l_m \sqrt{b_n b_m}}{\hat{\sigma}^2} \bar{Y}^2 = \underbrace{\frac{\sigma^2}{\hat{\sigma}^2}}_{\overset{P}{\rightarrow} 1} \frac{l_n l_m \sqrt{b_n b_m}}{\sigma^2} \bar{Y}^2$$}
    Application of Markov's inequality yields
    \begin{align*}\color[RGB]{0, 0, 0}
        \P\left(\left|\frac{l_n l_m \sqrt{b_n b_m}}{\sigma^2} \bar{Y}^2  \right| > \varepsilon \right) 
        &\color[RGB]{0, 0, 0}\leq  \frac{1}{\,\varepsilon\,} \,\ew\left(\left|  \frac{l_n l_m \sqrt{b_n b_m}}{\sigma^2} \bar{Y}^2  \right| \right) \\
        &\color[RGB]{0, 0, 0}= \frac{l_n l_m \sqrt{b_n b_m}}{\varepsilon\sigma^2} \ew\left(\bar{Y}^2 \right)  = \frac{l_n l_m \sqrt{b_n b_m}\sigma^2}{\varepsilon \sigma^2 nm}\\
        &\color[RGB]{0, 0, 0}= \frac{1}{\varepsilon \sqrt{b_n b_m}}  \; \longrightarrow 0 \; (nm \rightarrow \infty).
    \end{align*}
\end{proof}

Rewriting term~(\ref{eq-LyapunovCLTfull}) in Theorem~\ref{th-LyapunovCLTfull} leads us to
\begin{align*}
    (\ref{eq-LyapunovCLTfull}) = \frac{1}{\sqrt{2}} \left(\frac{l_n l_m}{\hat{\sigma}^2 \sqrt{b_n b_m}} \sum_{h = 1}^{b_n} \sum_{k = 1}^{b_m} \nuhat^2 - \sqrt{b_n b_m} \right) - \frac{1}{\sqrt{2}}\left(\frac{l_n l_m\sqrt{b_n b_m}}{\hat{\sigma}^2} \bar{Y}^2 - \frac{1}{\sqrt{b_n b_m}}\right)
\end{align*}
The convergence of the first term of the difference to a standard normal distribution {was proven in \color[RGB]{0, 0, 0}Proposition~\ref{prop-estimateVariance}}. With Proposition~\ref{prop-expMeanconvergence} and a second application of Slutzky's lemma, Theorem~\ref{th-LyapunovCLTfull} is proven. 
 \begin{remark}
     Note that we included a "$+1$" in~(\ref{eq-LyapunovCLTfull}) as a {\color[RGB]{0, 0, 0}correction for} the effect of {\color[RGB]{0, 0, 0} estimating} $\bar{X}^2$. This is negligible for the asymptotic result. However, since $b_n$ and $b_m$ tend to be rather small even in moderate sample sizes, simulations have shown that its inclusion is beneficial to the power of the test. 
 \end{remark}

\subsection{Consistency under the Alternative}

\begin{theorem}
Let {\color[RGB]{0, 0, 0}Assumptions~\ref{as-locFun} and \ref{as-blockSize}} be fulfilled and assume that $\ew(|Y_{1,1}|^{4 + \delta}) < \infty$ for $\delta > 0$. Then both under the hypothesis and the alternative it holds that
    \begin{align*} 
    T(n, m) &:= \frac{1}{b_n b_m} \sum_{h = 1}^{b_n} \sum_{k = 1}^{b_m} \left(\muhat - \bar{X}\right)^2 \overset{L^2}{\longrightarrow} \int_0^1 \int_0^1 \left(\mu(x, y) - \int_0^1 \int_0^1 \mu(u, v)\, \dd u \dd v \right)^2 \dd x \dd y \quad \text{ as } \, n,m \rightarrow \infty.
    \end{align*}
\end{theorem}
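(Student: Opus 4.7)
The approach is to decompose $T(n,m)$ into a deterministic signal part, a mean-zero cross term, and a pure-noise part, and to establish $L^2$-convergence of each summand. Set
$$\bar\mu_{hk}^{(n,m)} := \frac{1}{l_nl_m}\doublesumBlock\mu(i/n,j/m), \qquad \bar\mu^{(n,m)} := \frac{1}{nm}\sum_{i=1}^{n}\sum_{j=1}^{m}\mu(i/n,j/m),$$
and write $I$ for the right-hand side of the asserted limit. Under model~(\ref{mod-sigNoise}) one has $\hat\mu_{hk}-\bar X = (\bar\mu_{hk}^{(n,m)}-\bar\mu^{(n,m)})+(\hat\nu_{hk}-\bar Y)$, and expanding the square gives $T(n,m)=A_{nm}+B_{nm}+C_{nm}$ with
\begin{align*}
A_{nm} &:= \frac{1}{b_nb_m}\sum_{h,k}\left(\bar\mu_{hk}^{(n,m)}-\bar\mu^{(n,m)}\right)^2, \\
B_{nm} &:= \frac{2}{b_nb_m}\sum_{h,k}\left(\bar\mu_{hk}^{(n,m)}-\bar\mu^{(n,m)}\right)\left(\hat\nu_{hk}-\bar Y\right), \\
C_{nm} &:= \frac{1}{b_nb_m}\sum_{h,k}\left(\hat\nu_{hk}-\bar Y\right)^2.
\end{align*}

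First I would treat the deterministic piece $A_{nm}$. Since $\sum_{h,k}\bar\mu_{hk}^{(n,m)}=b_nb_m\bar\mu^{(n,m)}$, the ANOVA identity yields $\frac{1}{b_nb_m}\sum_{h,k}(\bar\mu_{hk}^{(n,m)})^2 = \frac{1}{nm}\sum_{i,j}\mu(i/n,j/m)^2 - W_{nm}$, where $W_{nm}\ge 0$ is the average within-block sample variance of $\mu$ evaluated on the grid, so that $A_{nm}=\frac{1}{nm}\sum_{i,j}\mu(i/n,j/m)^2 - W_{nm} - (\bar\mu^{(n,m)})^2$. By Assumption~\ref{as-locFun}, $\mu$ is bounded with discontinuity set contained in the Lebesgue-null union $\bigcup_i\partial\mathcal C_i$, so $\mu$ and $\mu^2$ are Riemann integrable on $[0,1]^2$; hence $\bar\mu^{(n,m)}\to\int_0^1\int_0^1\mu\,\dd x\,\dd y$ and $\frac{1}{nm}\sum_{i,j}\mu(i/n,j/m)^2\to\int_0^1\int_0^1\mu^2\,\dd x\,\dd y$. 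For $W_{nm}\to 0$ one bounds the within-block variance by the squared oscillation of $\mu$ on the block, so that $W_{nm}$ is dominated by the difference of upper and lower Darboux sums of $\mu$ over the block partition of $[0,1]^2$; this difference vanishes since $\mu$ is Riemann integrable and the block side lengths $1/b_n,1/b_m$ shrink to zero, which holds because $s_1,s_2<1$. Combining gives $A_{nm}\to I$.

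For the noise term $C_{nm}$, independence and $\ew Y_{ij}=0$ give $\ew C_{nm}\le\sigma^2/(l_nl_m)\to 0$; the assumption $\ew(|Y_{ij}|^{4+\varepsilon})<\infty$ together with Theorem~2 of \myCite{brillinger1962note} yields $\ew(\hat\nu_{hk}-\bar Y)^4=\mathcal O((l_nl_m)^{-2})$, and Cauchy-Schwarz then gives $\ew C_{nm}^2=\mathcal O((l_nl_m)^{-2})\to 0$. For the cross term $B_{nm}$, $\ew B_{nm}=0$; using the identity $\sum_{h,k}(\bar\mu_{hk}^{(n,m)}-\bar\mu^{(n,m)})=0$ together with the pairwise independence of the block means, one computes $\Cov(\hat\nu_{hk}-\bar Y,\hat\nu_{h'k'}-\bar Y)=\sigma^2/(l_nl_m)\cdot\mathbbm{1}_{\{(h,k)=(h',k')\}}-\sigma^2/(nm)$, so that the off-diagonal contributions cancel and one obtains the exact identity $\Var B_{nm}=4\sigma^2 A_{nm}/(nm)\to 0$. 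Applying the $c_r$-inequality $|T(n,m)-I|^2\le 3(|A_{nm}-I|^2+B_{nm}^2+C_{nm}^2)$ and taking expectations then delivers the $L^2$-convergence.

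The main obstacle is the deterministic analysis of $A_{nm}$: because each block average smooths $\mu$ over cells whose side lengths shrink to zero yet still contain arbitrarily many grid points, one must simultaneously juggle two asymptotic scales (refinement of the block partition and refinement of the grid within each block), and the Lebesgue-null boundary hypothesis of Assumption~\ref{as-locFun} is essential for handling blocks straddling a change-region boundary. The noise-side estimates are routine given independence of the $Y_{ij}$ and the finite $(4+\varepsilon)$-th moment.
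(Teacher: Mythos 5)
Your proof is correct, and it takes a genuinely different route from the paper's on both the deterministic and the stochastic side. For the signal part, the paper replaces the block averages $\bar\mu_{hk}$ by the corner evaluations $\mu(h/b_n,k/b_m)$, controls the replacement error via uniform continuity of $\mu'$ on blocks meeting a single region plus a count of the $o(b_nb_m)$ blocks straddling a change-region boundary, and only then applies Lebesgue's integrability criterion to the resulting point-evaluation Riemann sum; you instead invoke the within/between sum-of-squares identity to write $A_{nm}$ directly as a full-grid Riemann sum of $\mu^2$ minus $(\bar\mu^{(n,m)})^2$ minus the within-block variance $W_{nm}$, and kill $W_{nm}$ with an oscillation/Darboux-sum argument. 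This avoids the two-scale comparison between block means and corner values and is arguably tidier, while resting on the same structural facts (boundedness of $\mu$, Lebesgue-null discontinuity set, $b_n,b_m\to\infty$ since $s_1,s_2<1$). For the stochastic part you actually prove more than the paper does: the paper only shows its terms (II) and (III) vanish in probability, which strictly speaking does not yet deliver the advertised $L^2$ convergence, whereas your bounds $\ew C_{nm}^2=\mathcal O((l_nl_m)^{-2})$ and the exact identity $\Var B_{nm}=4\sigma^2A_{nm}/(nm)$ (whose covariance computation I checked and which is correct) give genuine $L^2$ control. The price is that you must import the finite fourth-moment condition on the noise, which the theorem as stated omits but which is unavoidable for any $L^2$ claim and is part of the section's standing assumptions; you are right to make it explicit. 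The only cosmetic gap is the usual one the paper also ignores, namely that $b_nl_n$ need not equal $n$ exactly, which perturbs the identity $\sum_{h,k}\bar\mu_{hk}^{(n,m)}=b_nb_m\bar\mu^{(n,m)}$ by an asymptotically negligible boundary term.
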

The double integral above determines the variability of the function $\mu$. It is well known that it is 0 if and only if $\mu$ is constant except for Lebesgue-negligible sets. Due to this, it follows that the test is consistent against all other non-constant mean functions.

\begin{proof}
In this proof, we use the {\color[RGB]{0, 0, 0}short-hand} notation {\color[RGB]{0, 0, 0}$\mu_{i,j} := \mu\left(\frac{i}{n}, \frac{j}{m}\right)$}. Denote with $\mu_{sup} := \sup_{x, y \in [0,1]} |\mu(x, y)|$ the maximal absolute value $\mu$ takes {\color[RGB]{0, 0, 0}and with $\bar{\mu}_{n,m} = \bar{\mu} = \frac{1}{nm} \sum_{i = 1}^n\sum_{j = 1}^m \mu_{i, j}$ the mean of the location function over all sampled locations.} First, we show that
$$T(n, m) \approx \frac{1}{b_n b_m} \sum_{h = 1}^{b_n} \sum_{k = 1}^{b_m} \left(\mu_{h l_n, k l_m} - \bar{\mu}_{\cdot l_n, \cdot l_m}\right)^2,$$
i.e., the arguments in the variance function in $T$ can be replaced by deterministic ones. The $\mu_{hl_n, kl_m} = \mu\left(\frac{h}{b_n}, \frac{k}{b_m}\right)$ can be seen as "block representatives" opposed to the block means $\hat{\mu}_{hk}$ in the original function $T$. Then $\bar{\mu}_{\cdot l_n, \cdot l_m} := \frac{1}{b_n b_m} \sum_{h = 1}^{b_n} \sum_{k = 1}^{b_m} \mu_{hl_n, kl_m}$ is the mean over all block representatives. Since $X_{i,j} = \mu_{i,j} + Y_{i,j}$ it follows that 
$$\muhat = \frac{1}{l_n l_m} \doublesumBlock X_{i,j} = \frac{1}{l_n l_m} \doublesumBlock ({\color[RGB]{0, 0, 0}\mu_{i, j}} + Y_{i,j}) = \bar{\mu}_{h,k} + {\color[RGB]{0, 0, 0}\bar{\nu}_{h,k}}.$$
Using this, $T(n, m)$ can be written as
\begin{align*}
T(n, m) &= \frac{1}{b_n b_m} \sum_{h = 1}^{b_n} \sum_{k = 1}^{b_m} \left(\muhat - \bar{X}\right)^2 \\
&= \frac{1}{b_n b_m}  \sum_{h = 1}^{b_n} \sum_{k = 1}^{b_m} \left(\bar{\mu}_{h,k} - \bar{\mu} + {\color[RGB]{0, 0, 0}\bar{\nu}_{h,k}} - \bar{Y} \right)^2 \\
&= \frac{1}{b_n b_m}  \sum_{h = 1}^{b_n} \sum_{k = 1}^{b_m} \left( (\bar{\mu}_{h,k} - \bar{\mu})^2 + 2(\bar{\mu}_{h,k} - \bar{\mu})({\color[RGB]{0, 0, 0}\bar{\nu}_{h,k}} - \bar{Y}) +  ({\color[RGB]{0, 0, 0}\bar{\nu}_{h,k}} - \bar{Y})^2 \right).
\end{align*}
Consequently,
\begin{align*}
    &\left| T(n, m) -  \frac{1}{b_n b_m} \sum_{h = 1}^{b_n} \sum_{k = 1}^{b_m} \left(\mu_{h l_n, k l_m} - \bar{\mu}_{\cdot l_n, \cdot l_m}\right)^2\right| \\
    &=\left| \frac{1}{b_n b_m}  \sum_{h = 1}^{b_n} \sum_{k = 1}^{b_m} \left( (\bar{\mu}_{h,k} - \bar{\mu})^2 -  \left(\mu_{h l_n, k l_m} - \bar{\mu}_{\cdot l_n, \cdot l_m}\right)^2 + 2(\bar{\mu}_{h,k} - \bar{\mu})({\color[RGB]{0, 0, 0}\bar{\nu}_{h,k}} - \bar{Y}) +  ({\color[RGB]{0, 0, 0}\bar{\nu}_{h,k}} - \bar{Y})^2 \right)\right|\\
    &\leq \underbrace{\left| \frac{1}{b_n b_m}  \sum_{h = 1}^{b_n} \sum_{k = 1}^{b_m} \left( (\bar{\mu}_{h,k} - \bar{\mu})^2 -  \left(\mu_{h l_n, k l_m} - \bar{\mu}_{\cdot l_n, \cdot l_m}\right)^2\right) \right| }_{\text{(I)}}
    + \underbrace{\left| \frac{1}{b_n b_m} \sum_{h = 1}^{b_n} \sum_{k = 1}^{b_m} ({\color[RGB]{0, 0, 0}\bar{\nu}_{h,k}} - \bar{Y})^2 \right|}_{\text{(II)}}\\
    &\quad\quad + \underbrace{\left|\frac{2}{b_n b_m}  \sum_{h = 1}^{b_n} \sum_{k = 1}^{b_m} (\bar{\mu}_{h,k} - \bar{\mu})({\color[RGB]{0, 0, 0}\bar{\nu}_{h,k}} - \bar{Y}) \right|}_{\text{(III)}}.
\end{align*}
We will now treat each of the three terms individually.\\

(I):
Turning to the first term, we have to make sure to include possible change regions $\mathcal{C}_1, ..., \mathcal{C}_r$ into the calculation. As long as Assumption~\ref{as-locFun} holds, we can divide the blocks into two sets. Denote by $\mathcal{R}^{(1)}$ the set of indices $1 \leq h \leq b_n, \; 1 \leq k \leq b_m$ where the block with index $(h, k)$ intersects with exactly one region (either with {\color[RGB]{0, 0, 0}or }without change). Accordingly, $\mathcal{R}^{(+)}$ is the set of those indices where the corresponding blocks intersect with more than one region. {\color[RGB]{0, 0, 0}The number of blocks in $\mathcal{R}^{(+)}$ is $s^{(n,m)}=\mathcal{O}(b_n+b_m)$: We divide our plane $[0, 1]^2$ into $b_n \times b_m$ blocks, each of size $\frac{1}{b_n} \times \frac{1}{b_m}$. We get equidistant grid lines with distance $\frac{1}{b_n}$ (horizontally) resp. $\frac{1}{b_m}$ (vertically). Since the boundary of each $\mathcal{C}_i$ has a finite length $\mathcal{L}_i$, it can intersect with at most $\left\lceil \frac{\mathcal{L}_i}{\frac{1}{b_n}} \right\rceil = \left\lceil \mathcal{L}_i b_n \right\rceil$ horizontal grid lines. Therefore, a straight line parallel to the $x$-axis can enter at most $\left\lceil \mathcal{L}_i b_n \right\rceil + 1$ blocks. Analogously, a straight line parallel to the $y$-axis can enter at most $\left\lceil \mathcal{L}_i b_m \right\rceil + 1$ blocks. The number of blocks that the border of $\mathcal{C}_i$ intersects with is therefore bounded by $\mathcal{O}\left(\mathcal{L}_i(b_n + b_m)\right)=\mathcal{O}(b_n+b_m)$. Hence the number of blocks that intersect with more than one region fulfills $s^{(n,m)} = \mathcal{O}(b_n + b_m)$.
}

Now term (I) can be split up as follows:
\begin{align*}
    \left| \frac{1}{b_n b_m}  \sum_{h = 1}^{b_n} \sum_{k = 1}^{b_m} \left( (\bar{\mu}_{h,k} - \bar{\mu})^2 -  \left(\mu_{h l_n, k l_m} - \bar{\mu}_{\cdot l_n, \cdot l_m}\right)^2\right) \right| 
    &= \left| \frac{1}{b_n b_m}  \sum_{h = 1}^{b_n} \sum_{k = 1}^{b_m} \left((\bar{\mu}_{h,k}^2 - \bar{\mu}^2) -  (\mu_{h l_n, k l_m}^2 - \bar{\mu}_{\cdot l_n, \cdot l_m}^2) \right) \right| \\
    &=\left| \frac{1}{b_n b_m}  \sum_{h = 1}^{b_n} \sum_{k = 1}^{b_m}  \left(\bar{\mu}^2_{h,k} - \mu_{h l_n, k l_m}^2  - \bar{\mu}^2 + \bar{\mu}^2_{\cdot l_n, \cdot l_m} \right)\right|\\
    &\leq \frac{1}{b_n b_m}  \sum_{h = 1}^{b_n} \sum_{k = 1}^{b_m} \left| \bar{\mu}^2_{h,k} - \mu_{h l_n, k l_m}^2\right| + \left|\bar{\mu}^2 - \bar{\mu}^2_{\cdot l_n, \cdot l_m} \right|
\end{align*}
{\color[RGB]{0, 0, 0} Define with 
$$\widetilde{\mathcal{R}}^{(1)} = \bigcup_{(h,k) \in \mathcal{R}^{(1)}} \left[\frac{h-1}{b_n}, \frac{h}{b_n}\right] \times \left[\frac{k-1}{b_m}, \frac{k}{b_m}\right]$$
the continuous extension of $\mathcal{R}^{(1)}$.}
To find an upper bound for term (I), we can make use of the continuity of the function $\mu$ over {\color[RGB]{0, 0, 0}$\widetilde{\mathcal{R}}^{(1)}$}. For some arbitrary $(i,j) \in \mathcal{I}_{hk}$ it follows
$$d\left(\left(\frac{i}{n}, \frac{j}{m}\right), \left(\frac{h}{b_n}, \frac{k}{b_m}\right)\right) \leq d\left(\left(\frac{(h - 1)l_n + 1}{n}, \frac{(k - 1)l_m + 1}{m}\right), \left(\frac{h}{b_n}, \frac{k}{b_m}\right)\right) \rightarrow 0$$
since both $\left|\frac{(h - 1)l_n + 1}{n} - \frac{h}{b_n}\right| \rightarrow 0$ and $\left| \frac{(k - 1)l_m + 1}{m} - \frac{k}{b_m}\right| \rightarrow 0 \; \forall h, k$ as $n,m \rightarrow \infty$. As $\mu'$ is continuous and therefore uniformly continuous over $[0, 1]$, it holds that $\forall \delta > 0 \; \exists \tilde{n}, \tilde{m} \in \mathbb{N}$ such that 
\begin{align*}
    \max_{(h, k) \in \mathcal{R}^{(1)}} \sup_{(i,j) \in \mathcal{I}_{h,k}} \left| \mu_{i,j} - \mu_{hl_n, kl_m}\right| = |\mu'_{i,j} - \mu'_{hl_n, kl_m}|  < \delta \quad \forall n > \tilde{n}, m > \tilde{m}.
\end{align*} 
This inequality still applies if we replace $\mu_{i,j}$ with the corresponding arithmetic block mean $\bar{\mu}_{h,k}.$ From the uniform continuity of the function $g: x \mapsto x^2$ over $[0, 1]^2$ it follows that $\forall \varepsilon_1 > 0$ $\exists \tilde{n},\tilde{m} \in \mathbb{N}$ with 
$$\max_{(h, k)\in \mathcal{R}^{(1)}} \left| \bar{\mu}^2_{h,k} - \mu^2_{hl_n, kl_m}\right| < \varepsilon_1  \quad \forall n > \tilde{n}, m > \tilde{m}.$$
On the other hand, for all $(h, k) \in \mathcal{R}^{(+)}$ we get $\left| \bar{\mu}^2_{h,k} - \mu^2_{hl_n, kl_m}\right| \leq 2\mu_{sup}^2$. Consequently,
\begin{align} \label{eq-ContFun1}
    \frac{1}{b_n b_m} \sum_{h = 1}^{b_n} \sum_{k = 1}^{b_m} \left|\bar{\mu}^2_{h,k} - \mu^2_{hl_n, kl_m} \right| < \frac{2s^{(n,m)}}{b_n b_m} \mu^2_{sup} + \frac{b_n b_m - s^{(n,m)}}{b_n b_m} \varepsilon_1 < \varepsilon_2
\end{align}
for $\varepsilon_2 > 0$ and  $n, m$ large enough, since we have $b_n,b_m\to \infty$. \\
Turning to the second term, for all $(h, k) \in \mathcal{R}^{(1)}$ we have $\left| \bar{\mu}_{h,k} - \mu_{hl_n, kl_m}\right| < \delta$ for $n, m$ sufficiently large as above. For $(h, k) \in \mathcal{R}^{(+)}$ we get that $\left| \bar{\mu}_{h,k} - \mu_{hl_n, kl_m}\right| \leq 2\mu_{sup}$. In total, this yields 
$$\left|\bar{\mu} - \bar{\mu}_{\cdot l_n, \cdot l_m} \right| \leq \frac{1}{b_n b_m} \sum_{h = 1}^{b_n} \sum_{k = 1}^{b_m} \left|\bar{\mu}_{h,k} - \mu_{hl_n, kl_m} \right| < \frac{2s^{(n,m)}}{b_n b_m} \mu_{sup} + \frac{b_n b_m - s^{(n,m)}}{b_n b_m} \delta < \delta$$
for $n, m$ large enough and by continuity we derive that $\forall \varepsilon_3 > 0 \; \exists \tilde{n}, \tilde{m}\in \mathbb{N}$ such that
\begin{align}\label{eq-ContFun2}
    \left|\bar{\mu}^2 - \bar{\mu}^2_{\cdot l_n, \cdot l_m} \right| < \varepsilon_3  \quad \forall n > \tilde{n}, m > \tilde{m}.
\end{align}
Combining (\ref{eq-ContFun1}) and (\ref{eq-ContFun2}), we get 
\begin{align*}
     \frac{1}{b_n b_m}  \sum_{h = 1}^{b_n} \sum_{k = 1}^{b_m} \left| \bar{\mu}^2_{h,k} - \mu_{h l_n, k l_m}^2\right| + \left|\bar{\mu}^2 - \bar{\mu}^2_{\cdot l_n, \cdot l_m} \right| < \varepsilon_2 + \varepsilon_3 < \varepsilon
\end{align*}
for {\color[RGB]{0, 0, 0}$n,m$} large enough.
\\

(II):
{\color[RGB]{0, 0, 0}
\begin{align*}
    \ew\left[\left|\frac{1}{b_n b_m} \sum_{h = 1}^{b_n} \sum_{k = 1}^{b_m} ({\color[RGB]{0, 0, 0}\bar{\nu}_{h,k}} - \bar{Y})^2 \right|^2\right] 
    &= \ew\left[\left|\frac{1}{b_n b_m} \sum_{h = 1}^{b_n} \sum_{k = 1}^{b_m} \left(\frac{1}{nm - l_n l_m} \underset{(i,j) \notin \mathcal{I}_{h,k} }{\sum_{i = 1}^n \sum_{j = 1}^m} Y_{i,j}\right)^2 \right|^2\right] \\
    &\hspace{-2mm}\overset{c_r\text{-ineq.}}{\leq} \frac{2 }{(b_n b_m (nm - l_n l_m))^2} \sum_{h = 1}^{b_n} \sum_{k = 1}^{b_m} \ew\left[\left|\underset{(i,j) \notin \mathcal{I}_{h,k} }{\sum_{i = 1}^n \sum_{j = 1}^m} \frac{Y_{i,j}}{ \sqrt{nm - l_n l_m}}\right|^4\right] \\
    &= \frac{2 \sigma^4}{b_n b_m (nm - l_n l_m)^2} \underbrace{\ew\left[\left|\underset{(i,j) \notin \mathcal{I}_{h,k} }{\sum_{i = 1}^n \sum_{j = 1}^m} \frac{Y_{i,j}}{\sigma \sqrt{nm - l_n l_m}}\right|^4\right]}_{\rightarrow 3 \; (nm \rightarrow \infty)} \; \longrightarrow 0 \; \text{as } nm \rightarrow \infty
\end{align*}
Thus, $\frac{1}{b_n b_m} \sum_{h = 1}^{b_n} \sum_{k = 1}^{b_m} \left({\color[RGB]{0, 0, 0}\bar{\nu}_{h,k}} - \bar{Y}\right)^2 \overset{L^2}{\longrightarrow} 0$ as $nm \rightarrow \infty$.}


(III): 
Using the Cauchy-Schwarz-inequality, we get
\begin{align*}\addtolength{\jot}{1em}
    &\left|\frac{2}{b_n b_m} \sum_{h = 1}^{b_n} \sum_{k = 1}^{b_m} (\bar{\mu}_{h,k} - \bar{\mu})({\color[RGB]{0, 0, 0}\bar{\nu}_{h,k}} - \bar{Y}) \right| 
    \leq \left|  \sqrt{\frac{2}{b_n b_m} \sum_{h = 1}^{b_n} \sum_{k = 1}^{b_m} (\bar{\mu}_{h,k} - \bar{\mu})^2} \cdot \sqrt{\frac{2}{b_n b_m}\sum_{h = 1}^{b_n} \sum_{k = 1}^{b_m} ({\color[RGB]{0, 0, 0}\bar{\nu}_{h,k}} - \bar{Y})^2} \right|.
\end{align*}
The first term on the right hand side is bounded since
\begin{align*}
    \sqrt{\frac{2}{b_n b_m} \sum_{h = 1}^{b_n} \sum_{k = 1}^{b_m} (\bar{\mu}_{h,k} - \bar{\mu})^2}
    \leq \sqrt{2}\left(\max(\mu) - \min(\mu)\right) = const.
\end{align*}
Analogously to (II), for the second term we get that 
\begin{align*} 
    \sqrt{\frac{2}{b_n b_m}\sum_{h = 1}^{b_n} \sum_{k = 1}^{b_m} ({\color[RGB]{0, 0, 0}\bar{\nu}_{h,k}} - \bar{Y})^2} \; \overset{L^2}{\longrightarrow} 0 \; (nm \rightarrow\infty).
\end{align*}
In total, we get that the term (III) converges to 0 as $nm$ goes to $\infty$.\\

It remains to show that 
$$\frac{1}{b_n b_m} \sum_{h = 1}^{b_n} \sum_{k = 1}^{b_m} \left(\mu_{h l_n, k l_m} - \bar{\mu}_{\cdot l_n, \cdot l_m}\right)^2 \rightarrow \int_0^1 \int_0^1 \left(\mu(x, y) - \int_0^1 \int_0^1 \mu(u, v) du dv\right)^2 dx dy.$$
By assumption, the discontinuities of the location function $\mu$ form a Lebesgue null set and $\mu$ is bounded on $[0, 1]^2$. Hence 
$$\bar{\mu}_{\cdot l_n, \cdot l_m} = \frac{1}{b_n b_m} \sum_{h = 1}^{b_n} \sum_{k = 1}^{b_m} \mu_{hl_n, kl_m} \rightarrow \int_0^1 \int_0^1 \mu(u, v) du dv$$
by Lebesgue's integrability criterion for multiple Riemann integrals (e.g. Theorem~14.5 in \citeI{Apostol1974}). Similarly, for an arbitrary constant $C \in \mathbb{R}$, $g_C(x, y) := (\mu(x, y) - C)^2$ has only discontinuities with Lebesgue-measure 0 and is bounded on $[0, 1]^2$. By Lebesgue's integrability criterion for multiple Riemann integrals, and by choosing $C~=~\int_0^1 \int_0^1 \mu(x, y) dx dy$, we get the desired convergence.
\end{proof}

{\color[RGB]{0, 0, 0}\begin{remark}
    Note that for the convergence of the Var test statistic $T^{Var}_{n,m}$ to the standard normal distribution, we only require the product $nm$ to tend to infinity. However, for the consistency under the alternative, both $n$ and $m$ need to increase to infinity, independent of each other.
\end{remark}}

\section{\color[RGB]{0, 0, 0}Extensions to dependent data} \label{sec-decorr}

{\color[RGB]{0, 0, 0}In case of observing values from a dependent random field, there are several possible modifications of our test statistic. The choice of a suitable modification depends on our knowledge of the dependence structure. Note that we assume the dependence between neighboring observations to be fixed and independent of the sample size, which is common practice in the change-point literature, see e.g. \citet{schmidt2024detecting} or \citet{kirch2025scan}. 

If our data are $M$-dependent, i.e., $Y_{i,j}$ and $Y_{i',j'}$ are independent if $\max\{|i-i'|,|j-j'|\}>M$, we can proceed as follows. We split our blocks consisting of $l_n\times l_m$ observations into smaller subblocks of size $\tilde{l}_n\times \tilde{l}_m$, where $\tilde{l}_n=l_n^r$ and $\tilde{l}_m=l_m^r$ for some $0<r<1$. In this way we get an increasing number of subblocks of increasing size for each block. Then we replace the observations in each subblock by the average of the observations in its upper left corner, removing the lower $M$ rows and the right $M$ columns of observations from it. The means of the reduced subblocks will be independent by construction under the assumptions of $M$-dependence, so that we can apply our test statistics to the spatial field of $\tilde{n}\times\tilde{m}=b_nl_n^{1-r}\times b_ml_m^{1-r}$ reduced subblock means. As we drop some rows and some columns of observations to achieve this independence, we will loose the information in a certain fraction of the observations. In case of $M=1$ and $\tilde{l}_n=\tilde{l}_m=3$ e.g., we will calculate the subblock means from 4 out of the 9 observations in each block. Note that the fraction of observations not used for the test statistic decreases from $5/9\approx 55.6\%$ if $\tilde{l}_n=\tilde{l}_m=3$ to $7/16\approx 43.8\%$ if $\tilde{l}_n=\tilde{l}_m=4$ and $9/25=36\%$ if $\tilde{l}_n=\tilde{l}_m=5$, so that the loss will become small if $m$ and $n$ are large. In the remainder of the paper, this method will be called "cut-off method".}\\

{\color[RGB]{0, 0, 0}Another, more widely  applicable} option to modify the tests for dependent data is to remove the correlation in the data beforehand. This can either be done by fitting a suitable model and working with the residuals, or, if the structure of the data is unknown or no suitable model exists, one can de-correlate the data using their sample autocovariance. This is a well-established approach, proposed, e.g., by \myCite{robbins2011mean} for detecting a shift in time series. According to their study, tests especially designed for dependent data only have slightly higher power than tests for independent data applied to one-step-ahead prediction residuals, but they come with substantial computational complexities. Other reasons to avoid tests adapted to dependence may be analytically unknown critical values or strict model assumptions that prevent the adapted test to be generalized to broader settings.  

The basis for our de-correlation algorithm is the assumption of stationarity of the data under the hypothesis. We need to estimate their autocovariances $\gamma(\bb{h})$ for all relevant lags $\bb{h} = (h_1, h_2)$. This is done by the empirical estimator
$$\hat{\gamma}_{\text{reg}}(h_1, h_2) =  \frac{1}{N}\sum_{i = 1}^{n-h_1} \sum_{j = 1}^{m-h_2} (X_{i,j} - \bar{X})(X_{i+h_1, j+h_2} - \bar{X}). $$
The autocovariances are estimated for all lags up to an upper bound $\bb{b}^{(n,m)} = (b_1^{(n)}, b_2^{(m)})$, all other autocovariances are set to 0. In this paper, $b_i^{(k)} = \left\lfloor 0.9 k^{(1/3)} \right\rfloor$ is used since it resembles the recommendations of \myCite{andrews1991heteroskedasticity} for kernel density estimation, and it showed good results in preliminary studies. 

Having obtained the estimated autocovariances, we order the data matrix $X$ into a vector $x = \text{vec}(X)$. Then, using all estimated $\hat{\gamma}(\bb{h})$, we construct the estimated autocovariance matrix $\hat{\Sigma}$ of the data vector $x$. To obtain the square root of the matrix $\hat{\Sigma}$, we perform a Cholesky decomposition, or the revised modified Cholesky decomposition (\citeI{schnabel1990new}) if the estimated autocovariance matrix is not positive-semidefinite. We invert this square root using the default \texttt{R} function \texttt{inv()} that is based on the LAPACK routine DGESV (\citeI{lapack99}). 
The de-correlation process is then performed as $y = \hat{\Sigma}^{-\frac{1}{2}} (x - \bar{x})$. Finally, we reorder $y$ back into a matrix $Y$ column-wise. Instead of regular autocovariances, one could also use a difference-based approach (see e.g. \citeI{tecuapetla2017autocovariance}). 

If the assumption of separability of the covariance function is justified, we can reduce the costs of decomposing and inverting $\hat{\Sigma}$ by estimating two smaller covariance matrices: one for the horizontal ($\hat{\Sigma}_1 \sim (m \times m)$) and one for the vertical ($\hat{\Sigma}_2 \sim (n \times n)$) direction. It holds that $\Sigma = \Sigma_1 \otimes \Sigma_2$. Accordingly, $\hat{\Sigma}_1$ and $\hat{\Sigma}_2$ can be estimated, decomposed, and inverted separately, reducing computation time from $\mathcal{O}(n^3m^3)$ to $\mathcal{O}(n^3 + m^3)$. It follows that $\hat{\Sigma}^{-\frac{1}{2}} = \hat{\Sigma}_1^{-\frac{1}{2}} \otimes \hat{\Sigma}_2^{-\frac{1}{2}}$.

 In the subsequent simulation study, we also investigated the behavior of the tests using such a difference-based estimator, but could not find any meaningful advantages to the regular one. Alternatively, if a suitable model for the data is known, one can fit that model and apply tests to its residuals for further analysis.

\section{Simulation study} \label{sec-sim}

In this section we analyze the finite sample behavior of the GMD and the VAR test under the hypothesis of a constant mean and under several alternatives. We do so using Monte Carlo simulations. The simulations are conducted using the software \texttt{R} (\citeI{R}, version 4.4.3) along with the packages \texttt{SChangeBlock} (\citeI{SChangeBlock}), \texttt{robcp} (\citeI{robcp}), and \texttt{ggplot2} (\citeI{ggplot2}).

\subsection{Setup}

We choose the dimension of a random field to be $n\times n$ such that $n = m$ and $N = n^2$. To build the blocks, we choose $s_n = s_m = s$ around 0.6 such that for the block length $l_n = [n^s]$ and the number of blocks per dimension $b_n = \lfloor n / l_n\rfloor$ it holds $l_n \cdot b_n = n$. Preliminary studies indicated that both tests work best if there neither are blocks at the edges of the random field that are smaller than the majority of the blocks, nor if such blocks are left out completely. For generating the noise $(Y_{i,j}: i, j \in  \{1, ..., n\})$ we consider three different distributions, namely the standard normal distribution $Y_{i,j} \sim \mathcal{N}(0, 1)$, the $t$-distribution with 3 degrees of freedom $Y_{i,j} \sim t_3$, and the $\chi^2_2$ distribution with 2 degrees of freedom, which equals the Exp(\nicefrac{1}{2}) distribution. Contrary to the requirements in Theorem~\ref{th-LyapunovCLTfull}, the $t_3$ distribution does not possess finite $(4 + \varepsilon)$-th absolute moments. 

To incorporate dependency, we use a symmetric Spatial Moving Average model of order $q$ (short: SMA$(q)$): 
$$Y_{i,j} = \sum_{k = -q}^q \sum_{l = -q}^q\theta_{k,l} \varepsilon_{k,l}.$$ 
The parameters $(\theta_{k,l})$ are chosen as $\theta_{k,l} = \left(\frac{\rho}{2}\right)^{|k - q - 1| + |l - q - 1|}$ with parameter $\rho$ for a pure SMA$(q)$ model. This yields $M = q + 1$-dependent data. Alternatively, we {\color[RGB]{0, 0, 0} simulate a SAR(1) field using
$$Y_{i,j} = \rho (Y_{i-1, j} + Y_{i, j-1}) + \varepsilon_{i,j}$$
with a constant parameter $\rho \in (-0.5, 0.5)$. We choose a burn-in period of 20, i.e. the first 20 simulated values in both directions are omitted.}
The distribution of the noise $(\varepsilon_{k,l})$ is chosen to be standard normal. For the simulations, we consider an SMA(1) and a SAR(1) model, each with parameters $\rho \in \{0.1, 0.2, 0.3\}$. 
{\color[RGB]{0, 0, 0} For both dependency structures, we investigate the behavior of the Var test if the simulated data is de-correlated. Since SMA(1) models as formulated above are 2-dependent, for this type of dependency we also investigate the effect of the cut-off method with subblocks of size $\tilde{l}_n = \tilde{l}_m = 4$. Even though 75\% of the data is omitted that way, we get the advantage of having more new data points as compared to a larger subblock size. In a preliminary study, we discovered that at least for small sample sizes, choosing $\tilde{l}_n = \tilde{l}_m = 4$ works best. As SAR(1) models are not $M$-dependent, using the cut-off method would not make sense or require a large value of $M$ (corresponding to a huge loss of information) to get a reasonable approximation. As opposed to SMA fields, fitting a SAR field is much easier. Therefore, we included the application of the Var test to the residuals of a SAR(1) model fitted by GLS, using the \texttt{R} package \texttt{spatialreg} (\citeI{spatialreg}).}

We test the hypothesis of a constant $\mu$ in $X_{i,j} = \mu(\nicefrac{i}{n}, \nicefrac{j}{m}) + Y_{i,j}, \, 1 \leq i,j \leq n$, against the alternative that $\mu$ changes across the field. For instance, $\mu$ could abruptly shift in some area or $\mu$ could steadily increase from one end of the field to the other. We will investigate the behavior of the tests on the following four different alternatives:\vspace{-3mm}\\
\begin{table}[ht]
    \centering
    \begin{tabular}{r l}
         $\mathbb{A}_1$: & $\mu(x, y) = \frac{H}{\sqrt{nm}} \cdot \mathbbm{1}_{\{0 \leq x \leq \frac{1}{l_n}\}}  \mathbbm{1}_{\{0 \leq y \leq \frac{1}{l_m}\}}$ \\
         $\mathbb{A}_2$: & $\mu(x, y) = \frac{H}{\sqrt{nm}} \cdot \mathbbm{1}_{\{0 \leq y \leq \frac{1}{2}\}}$ \\
         $\mathbb{A}_3$: & $\mu(x, y) = \frac{H}{\sqrt{nm}} \cdot \frac{y - 1}{m-1}$ \\
         $\mathbb{A}_4$: & $\mu(x, y) = \frac{H}{\sqrt{nm}} \cdot \left(\mathbbm{1}_{\{\frac{1}{4} \leq x \leq \frac{1}{2}\}} \mathbbm{1}_{\{0 \leq y \leq \frac{1}{2}\}} + \mathbbm{1}_{\{0 \leq x \leq \frac{1}{4}\}} \mathbbm{1}_{\{\frac{1}{4} \leq y \leq \frac{1}{2}\}}\right)$
    \end{tabular}
    \label{tab-muFuns}
\end{table}

Figure~\ref{fig-alt-all} displays the alternatives presented here, along with markings on how the blocks are constructed. The shifts in $\mathbb{A}_{1}$, $\mathbb{A}_{2}$ and $\mathbb{A}_4$ have a height of $ H \in \{0, 0.05, 0.1, 0.25, 0.5, 1\}$ for independent and $H \in \{0, 0.2, 0.4, 1, 2, 4\}$ for dependent data. In $\mathbb{A}_3$, the ascent is linear with a shift of 0 at $x_{11}, ..., x_{1n}$ up to a shift of $H$ at $x_{n1}, ..., x_{nn}$. Under the hypothesis of no change, we set $\mu \equiv 0$. 
\begin{figure}[ht]
\centering
    \includegraphics[width = 0.6\textwidth]{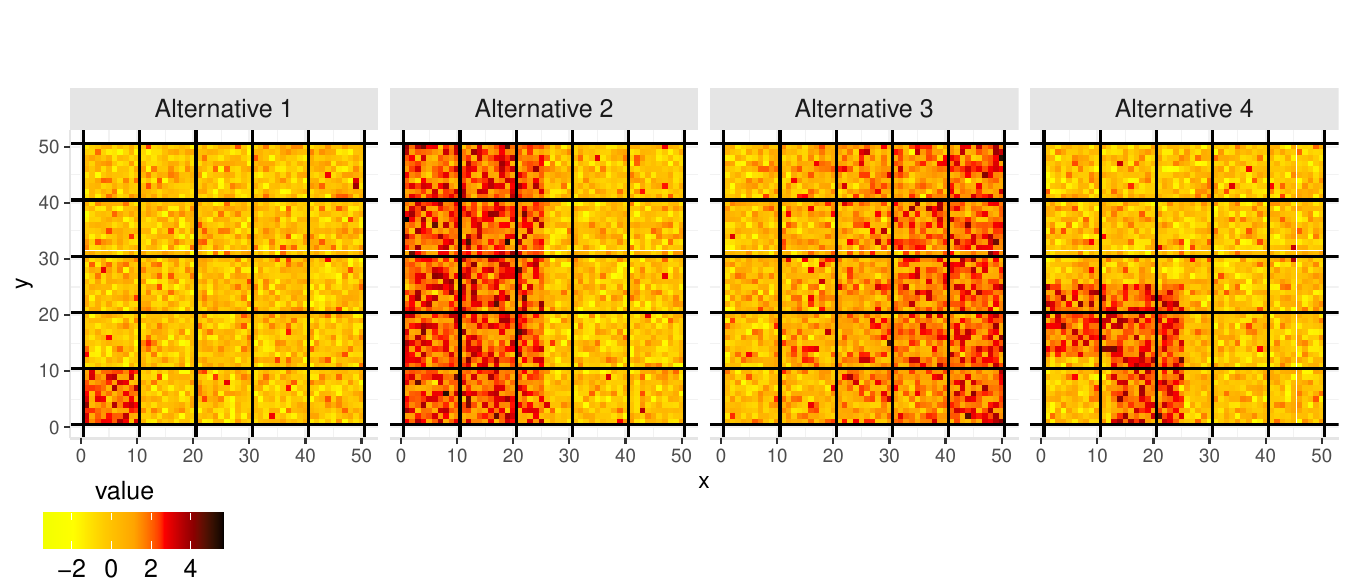}
    \caption{Examples of alternatives}
    \label{fig-alt-all}
\end{figure}
As a variance estimator for scaling the test statistic, we choose the ordinary sample variance 
$\hat{\sigma}^2 = \frac{1}{N-1} \sum_{i = 1}^n \sum_{j = 1}^m (X_{i,j} - \bar{X})^2.$
It is consistent and converges to the true variance $\sigma^2$ in $\mathcal{O}\left(\frac{1}{\sqrt{N}}\right)$ under the hypothesis and appropriate assumptions for the noise, where $N$ is the total number of observations. All results are obtained based on 1000 replications each at the nominal significance level $\alpha = 0.05$.

\subsection{Results under independence}

Table~\ref{tab-size-indep} displays empirical sizes for $n \in \{10, 20, 50\}$ and all three innovation distributions under the null hypothesis, rounded to three digits.
\begin{table}[ht]
\centering
\begin{tabular}{l c rr c rr c rr}
   && \multicolumn{2}{c}{$\mathcal{N}(0, 1)$} && \multicolumn{2}{c}{$t_3$} && \multicolumn{2}{c}{$\chi^2_2$} \\ 
   $n$&& GMD & Var && GMD & Var && GMD & Var\\
\hline
  10   && 0.091 & 0.046 && 0.086 & 0.047 && 0.089 & 0.050 \\  

  20   && 0.058 & 0.056 && 0.048 & 0.049 && 0.048 & 0.053 \\ 

  50   && 0.053 & 0.064 && 0.054 & 0.051 && 0.052 & 0.058 \\ 
\end{tabular}
\caption{Empirical sizes of the GMD and the Var test under the hypothesis at nominal significance level $\alpha = 0.05$ for $n \in \{10, 20, 50\}$ and different innovation distributions, rounded to three digits.} 
\label{tab-size-indep}
\end{table}
The size of  the GMD test exceeds the significance level at $n = 10$ with values between 0.086 and 0.091. For the larger sample sizes considered here, the test keeps the level. For 1000 repetitions, the standard deviation is about $0.0069$, and only for the combinations GMD, $n = 10$ and Var, $\mathcal{N}(0, 1)$, $n = 50$, the empirical sizes are outside of two standard deviations from 0.05. This problem for $n = 10$ does not occur with the Var test and we conclude that the GMD test needs a larger sample size, such as $n, m \geq 20$ to work properly under the hypothesis. All in all, even though the significance level is not seriously infringed, both tests show a slightly liberal behavior.

The three plots in Figure~\ref{fig-resCorr} depict size-corrected power curves of both the GMD and the Var test for $n \in \{10, 20, 50\}$ and all three noise distributions for alternatives $\mathbb{A}_1$ to $\mathbb{A}_4$. 
\begin{figure}[ht]
    \centering
    \includegraphics[page = 1, width=0.6\linewidth]{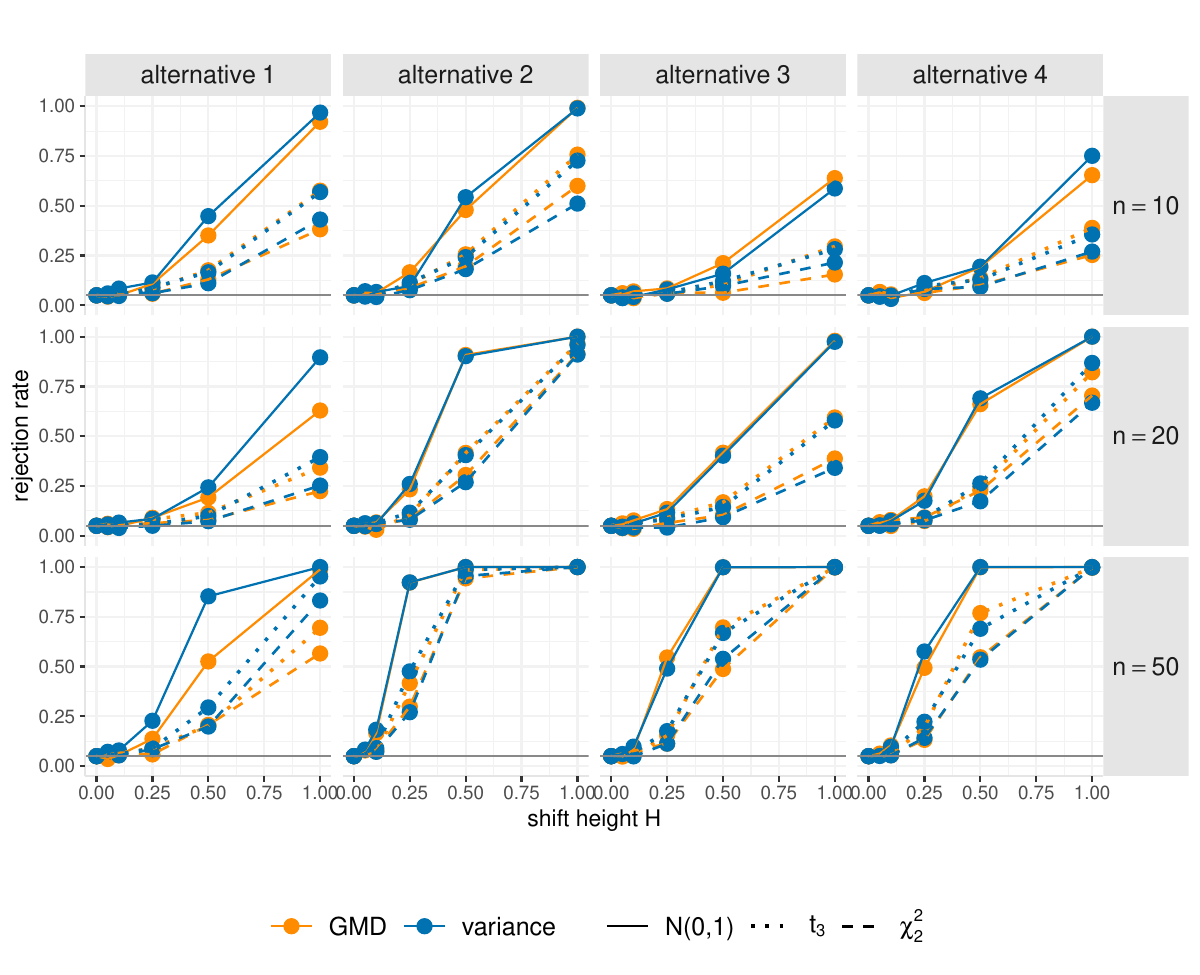}
    \caption{Size-corrected rejection rates of the GMD (orange) and the Var (blue) test at the nominal significance level $\alpha = 0.05$ as a function of location shift $H$, for $n = 10, 20, 50$, alternatives $\mathbb{A}_{1}$ to $\mathbb{A}_{4}$ and $\mathcal{N}(0, 1)$ (solid), $t_3$ (dotted), $\chi^2_2$ (dashed) distributed innovations.}
    \label{fig-resCorr}
\end{figure}
In case of alternative $\mathbb{A}_1$, the Var test shows higher rejection rates for all sample sizes and noise distributions considered here. This is probably due to the higher robustness of the mean difference as a measure of variability, compared to the ordinary variance. As a result, the GMD test ignores the single shifted block in this alternative more often. For $n = 10$ we need to consider that although the size-corrected power curves look similar, the Var test has an advantage over the GMD test as the latter struggles to keep the significance level for this sample size. As expected, both tests have the highest power under $\mathbb{A}_2$ where exactly half of the data is shifted. We conclude that both abrupt changes and trends can be detected quite reliably. Detailed values for shift height $H = 0.5$ can be found in the Appendix in Table~\ref{tab-RRH20}. 
Apart from these findings, we do not detect further relevant differences between the two tests. Even for $t_3$-distributed noise, which does not possess $(4+\varepsilon)$-th moments as required by our asymptotic theory for the Var test, this test still yields good results. Due to the slight advantages of the Var test, we will concentrate on this method in the following. 

\subsection{Results under dependence}

{\color[RGB]{0, 0, 0}
In this study, we choose $n \in \{16, 36, 48\}$, since for the cut-off method we need to make sure that after treating the sample enough observations remain to form a sufficient number of blocks for the test statistic.} Table~\ref{tab-DepH0-2} contains the empirical rejection rates of the Var test for SMA(1) and SAR(1) data with different parameters $\rho$. 
\begin{table}[ht]
\centering
\begin{tabular}{lll c r r r r}
  \hline
  &&n&& \multicolumn{1}{c}{$\rho = 0.1$} & \multicolumn{1}{c}{$\rho = 0.2$} & \multicolumn{1}{c}{$\rho = 0.3$} \\ 
  \hline
  SMA(1) & de-corr.  & 16 && 0.086 & 0.077 & 0.059 \\  
         &           & 36 && 0.033 & 0.030 & 0.034 \\  
         &           & 48 && 0.022 & 0.018 & 0.014 \\  
    \cline{2-7}
        & cut-off    & 16 && 0.030 & 0.030 & 0.026 \\  
        &            & 36 && 0.050 & 0.056 & 0.055 \\  
        &            & 48 && 0.051 & 0.051 & 0.050 \\ 
  \hline
SAR(1) & de-corr.  & 16 && 0.092 & 0.094 & 0.094 \\  
       &           & 36 && 0.037 & 0.056 & 0.156 \\  
       &           & 48 && 0.021 & 0.016 & 0.043 \\  
    \cline{2-7}
   & model residuals & 16 && 0.053 & 0.065 & 0.086 \\  
   &                 & 36 && 0.059 & 0.082 & 0.133 \\  
   &                 & 48 && 0.059 & 0.088 & 0.159 \\ 
  \hline
\end{tabular}
\caption{Empirical sizes of the Var test at nominal significance level $\alpha = 0.05$ for $n \in \{16, 36, 48\}$ and both SMA(1) and SAR(1) random fields with $\rho \in \{0.1, 0.2, 0.3\}$, whitened using de-correlation, the cut-off method and SAR(1) model residuals, under the hypothesis, rounded to three digits.}
\label{tab-DepH0-2}
\end{table}

{\color[RGB]{0, 0, 0}
We notice that the tests have some problems in keeping the significance level in the case of SAR(1) fields with $\rho=0.3$, with empirical sizes up to 0.159.
The autocovariance based de-correlation test keeps the significance level for the largest $n=48$ considered here, while the test using the SAR(1) model residuals also shows some problems for $\rho=0.2$. This might be explained by the bias and large MSE of the sample autocovariances and the parameter estimators for smaller values of $n$ and larger dependency parameters. 
Apart from that, the level is kept very well and can even be seen as slightly conservative for some combinations. The cut-off method approximately maintains the significance level in all scenarios, with values ranging from 0.026 to 0.056. }

Figures~\ref{fig-Dep-MA-2} and \ref{fig-Dep-AR-2} display the size-corrected rejection rates of the Var test for SMA(1) and SAR(1) random fields. 
\begin{figure}
    \centering
    \includegraphics[width=0.6\linewidth, page = 2]{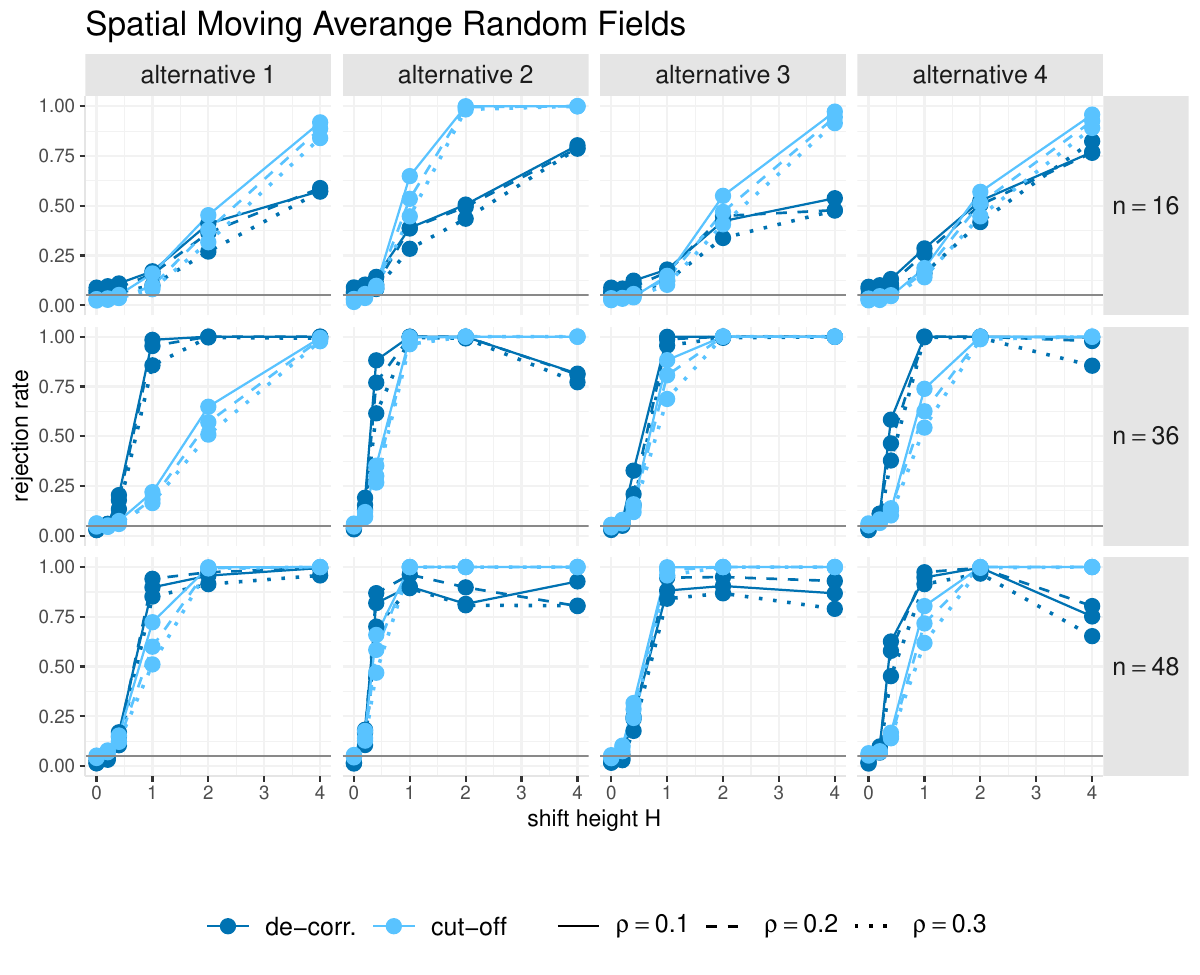}
    \caption{Size-corrected rejection rates of the Var test at the nominal significance level $\alpha = 0.05$ as a function of location shift $H$, for $n = 12, 27, 36$, alternatives $\mathbb{A}_{1}$ to $\mathbb{A}_{4}$, for SMA(1) fields with parameter $\rho \in \{0.1, 0.2, 0.3\}$, whitened by de-correlation (blue) and the cut-off method (light blue).}
    \label{fig-Dep-MA-2}
\end{figure}
\begin{figure}
    \centering
    \includegraphics[width=0.6\linewidth, page = 2]{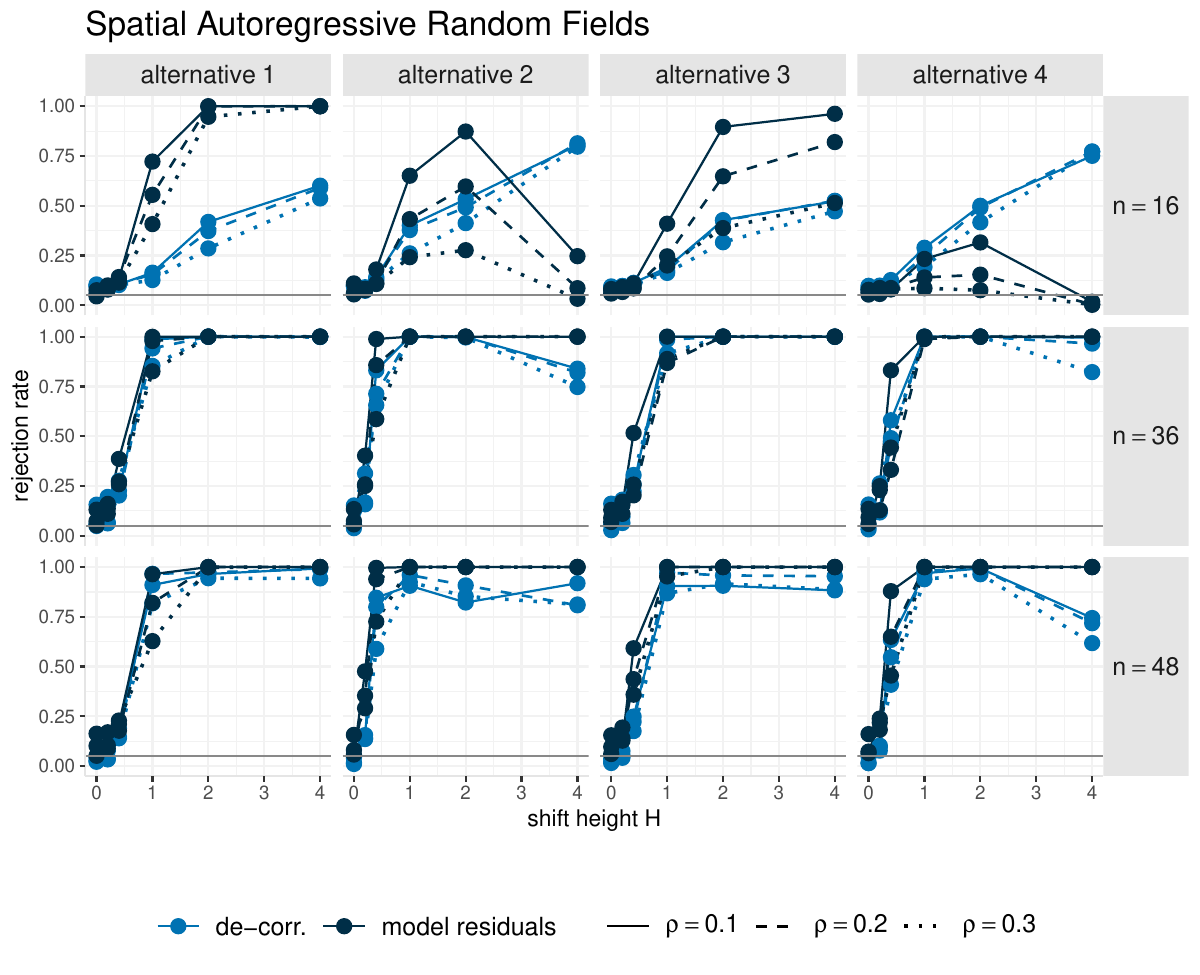}
    \caption{Size-corrected rejection rates of the Var test at the nominal significance level $\alpha = 0.05$ as a function of location shift $H$, for $n = 16, 36, 48$, alternatives $\mathbb{A}_{1}$ to $\mathbb{A}_{4}$, for SAR(1) fields with parameter $\rho \in \{0.1, 0.2, 0.3\}$, whitened by de-correlation (blue) and residuals of a SAR(1) model fit (dark blue).}
    \label{fig-Dep-AR-2}
\end{figure}
{\color[RGB]{0, 0, 0}For the SMA(1) fields, we compare the de-correlation with the cut-off method, whereas for the SAR(1) field, the de-correlation technique and calculation of the SAR(1) model residuals with estimated parameter values are used to whiten the data.} We investigate location shifts up to a height of 4 as opposed to 1 in the previous scenario. 

Considering the SAR(1) scenarios, the first detail striking the eye is that the power curves of the de-correlation and the SAR(1) residuals display a drop for some shift height before regaining power for even higher shifts. {\color[RGB]{0, 0, 0}This lack of monotonicity of the power function has already been discussed in case of change-point tests for time series data; see e.g. \citet{crainiceanu2007nonmonotonic}. It can be explained by the necessity of estimating the autocovariance resp. the dependency parameter. Due to the location changes, this value is overestimated, which leads to a stronger whitening effect and thus to a loss of power}. For larger shifts, this overestimation can be compensated, and we see a convergence to 100\% rejection rates{\color[RGB]{0, 0, 0}, or would see it for even larger sight heights $H$ }. For a simple alternative like $\mathbb{A}_2$ or $\mathbb{A}_4$, the testing procedure  works well already for $n = 16$.  The test on the SAR(1) residuals displays a much more pronounced drop in power under these alternatives $\mathbb{A}_2$ and $\mathbb{A}_4$ for $n = 16$, ultimately making the de-correlation the better method. 
 In all other situations, however, there is little to no loss in power visible and the test on SAR(1) residuals outperforms the test using general de-correlation. 

{\color[RGB]{0, 0, 0} The cut-off method does not have problems with non-monotonic power curves as no parameter estimate is needed. For $n = 16$, the cut-off method performs better than de-correlation, while it is the other way round for the larger sample sizes and a smaller shift height $H < 2$. Nevertheless, as the shift height increases, the de-correlation method is affected by the non-monotonicity and the power drops below that of the cut-off method. In the scenario $\mathbb{A}_1$ and $n = 36$, the lower power of the cut-off method can be explained by the choice of alternative: as there is only a small block affected by the location shift, cutting off data and averaging over the remaining data makes detecting such a small change region difficult. For the remaining scenarios, the cut-off method is a valid competitor to the de-correlation method, especially considering its good behavior under the hypothesis. 

In summary, the methods that use prior information about the dependency structure, i.e., the cut-off method and the test using SAR(1) residuals, mostly yield better power than the general de-correlation method. However, the plots indicate that the latter can still provide good results, without further knowledge about the dependency structure of the data. We also investigated the power of the GMD test applied to the de-correlated data instead of the Var test, but no major differences between the two tests were detected, considering the known difficulties with alternative $\mathbb{A}_1$.}

\section{Application to satellite images} \label{sec-satellite}

The following example illustrates the application of the tests to de-correlated satellite data obtained from the Landsat 8 satellite. This satellite is part of the NASA landsat project and collects data of the earth’s land surface on 9 different spectral bands in the visible and short-wave infrared spectral regions (\citeI{knight2014landsat}). Such satellite imagery helps us to to observe the earth's surface and recognize changes in time. The data can be accessed from \url{https://earthexplorer.usgs.gov/}. To be able to process the data in \texttt{R}, we use the packages \texttt{gdalcubes} (\citeI{gdalcubes1}; \citeI{gdalcubes2}), \texttt{magrittr} (\citeI{magrittr}), \texttt{xts} (\citeI{xts}), \texttt{magick} (\citeI{magick}) and \texttt{tidyverse} (\citeI{tidyverse}). 

We use data from a small region of the Brazilian amazon rainforest, captured on August 12, 2014 and on July 19, 2017. The coordinates of the region are between -7355090 and -7351340 in latitude, and -1023760 and -1019440 in longitude in the EPSG:3857 format. Each pixel comprises a square of $30 \times 30$ meters. In total, we get two images with a size of $144 \times 125$ pixels each.

Our interest is to determine whether deforestation has occurred. To do so, we consider the Normalized Difference Vegetation Index (NDVI). This is a vegetation index measuring the greenness of biomass. It is calculated from the red (visible) and near-infrared spectral bands and takes values between -1 and 1. The greener the biomass, the higher the NDVI. Negative values do not usually occur on land (\citeI{myneni1995interpretation}; \citeI{tucker1979red}).

Figure~\ref{fig-sat1} illustrates the NDVI images captured on two different dates. The earlier left image predominantly displays green areas, with only one very small yellow dot on the right border.
\begin{figure}[ht]
    \centering
    \includegraphics[width = 0.35\textwidth, page = 2]{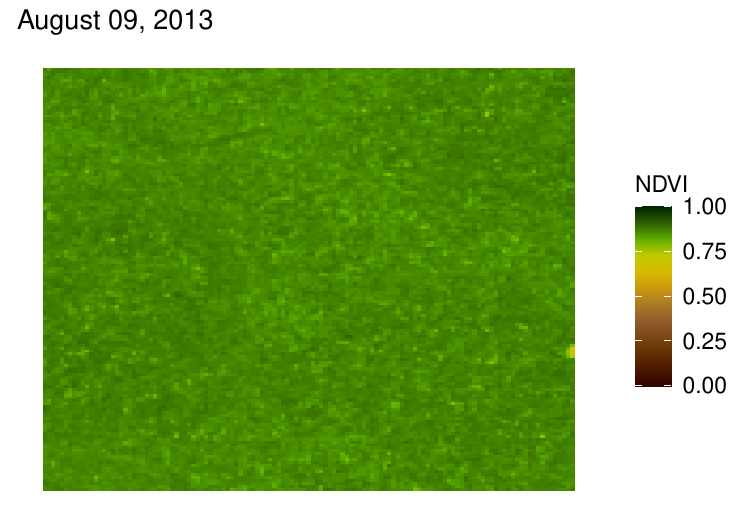}
    \includegraphics[width = 0.35\textwidth, page = 5]{Satellite1.pdf}
    \caption{NDVI images of size 144 $\times$ 125 each (left: 2014, right: 2017)}
    \label{fig-sat1}
\end{figure}
In contrast, the right image reveals large yellow-brownish patches that indicate parcels of land where trees have been cut down. {\color[RGB]{0, 0, 0}Since we do not have any information on the dependency structure of the images, we use de-correlation before applying the Var test.} The original images, sized at $144 \times 125$ pixels, are too large to be de-correlated effectively, even under the assumption of a separable covariance function. Doing so would require excessive time and computational resources. Additionally, the "green" in the 2014 image might still show some slight structure, with certain parts appearing darker than others. Therefore, we divide each image into 30 sub-images, where each sub-images measures $24 \times 25$ pixels. Figure~\ref{fig-sat1-split} displays these split images from both 2014 and 2017. 
\begin{figure}[ht]
    \centering
    \includegraphics[width=0.35\textwidth, page = 2]{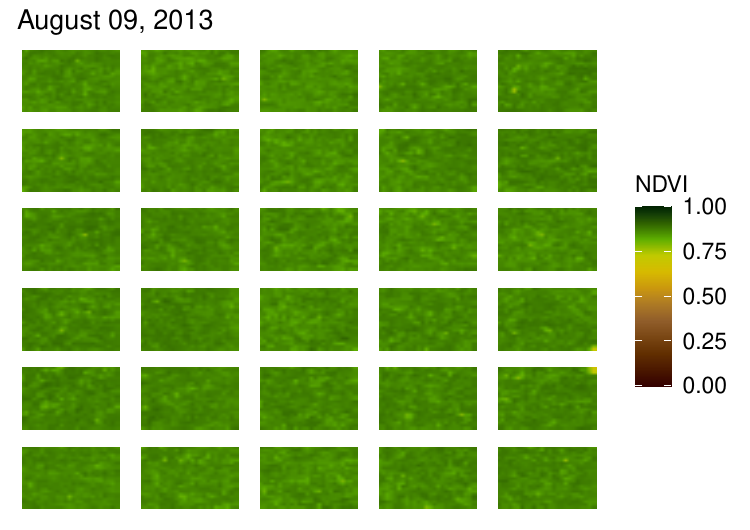}
    \includegraphics[width=0.35\textwidth, page = 5]{Satellite1-split.pdf}
    \caption{NDVI sub-images after splitting into 5 $\times 6$ images of size $24 \times 25$ each (left: 2014, right: 2017)}
    \label{fig-sat1-split}
\end{figure}
This segmentation simplifies the de-correlation process as we only need to invert 30 matrices of size $600 \times 600$ for each image. The plots in Figure~\ref{fig-sat1-split-decorr} show the de-correlated data from both time points, where each sub-image has been de-correlated individually. In the left plot depicting the data from 2014, we observe only noise without any distinct structure.
\begin{figure}[ht]
    \centering
    \includegraphics[width=0.35\textwidth, page = 2]{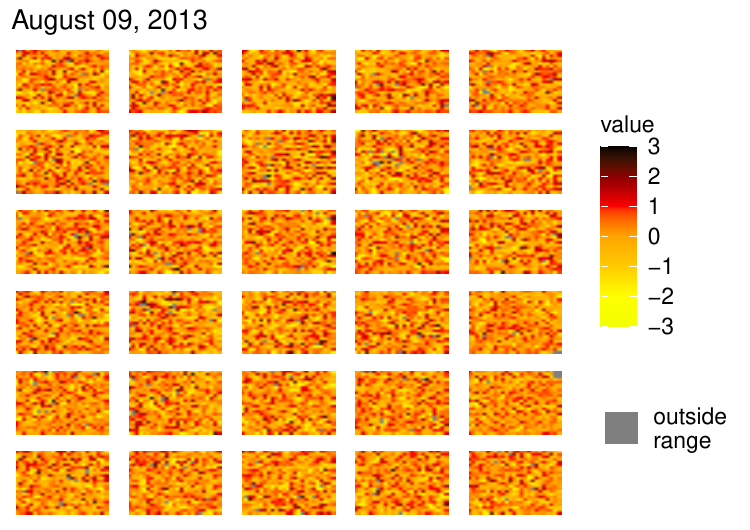}
    \includegraphics[width=0.35\textwidth, page = 5]{Satellite1-split-decorr-regular_2.pdf}
    \caption{De-correlated sub-images (left: 2014, right: 2017)}
    \label{fig-sat1-split-decorr}
\end{figure}
 Conversely, in the right plot displaying the data from 2017, several deforested areas are clearly visible. Table~\ref{tab-sat-var-decorr} shows the test results of the Var test on the de-correlated sub-images for both dates. Values that are significant at a 5\% level after Bonferroni-Holm correction are displayed in red.
\begin{table}[ht]
\centering
\begin{tabular}{rrrrr}
  \hline
  0.837 & 0.607 & 0.061 & 0.496 & 0.108 \\ 
  0.178 & 0.133 & 0.078 & 0.355 & 0.265 \\ 
  0.299 & 0.648 & 0.084 & 0.639 & 0.826 \\ 
  0.336 & 0.113 & 0.505 & 0.531 & 0.090 \\ 
  0.286 & 0.504 & 0.483 & 0.185 & {\color[RGB]{255, 0, 0}0.000} \\ 
  0.701 & 0.589 & 0.392 & 0.542 & 0.075 \\ 
  \hline
\end{tabular}
\hspace{10mm}
\begin{tabular}{rrrrr}
  \hline
  0.097              & 0.678              & {\color[RGB]{255, 0, 0}0.000} & {\color[RGB]{255, 0, 0}0.000} & {\color[RGB]{255, 0, 0}0.000} \\ 
  0.195              & 0.444              & {\color[RGB]{255, 0, 0}0.000} & {\color[RGB]{255, 0, 0}0.000} & {\color[RGB]{255, 0, 0}0.000} \\ 
  0.724              & {\color[RGB]{255, 0, 0}0.000} & {\color[RGB]{255, 0, 0}0.000} & {\color[RGB]{255, 0, 0}0.000} & 0.534              \\ 
  {\color[RGB]{255, 0, 0}0.000} & {\color[RGB]{255, 0, 0}0.000} & {\color[RGB]{255, 0, 0}0.000} & {\color[RGB]{255, 0, 0}0.000} & 0.261              \\ 
  {\color[RGB]{255, 0, 0}0.000} & {\color[RGB]{255, 0, 0}0.000} & 0.593              & {\color[RGB]{255, 0, 0}0.000} & 0.670              \\ 
  {\color[RGB]{255, 0, 0}0.000} & {\color[RGB]{255, 0, 0}0.000} & 0.496              & 0.765              & 0.362              \\
   \hline
\end{tabular}
\caption{p-values of the Var test on the de-correlated sub-images from August 12, 2014 (left) and Juli 19, 2017 (right), rounded to three digits. Values significant to a 5\% level after Bonferroni-Holm correction are marked in red.} 
\label{tab-sat-var-decorr}
\end{table}

In the left table, there is only one significant p-value, corresponding to the sub-image at position (5, 5) containing part of the visible dot, while the test results are not significant for the other sub-images. As opposed to this, the right table displays 18 significant values. For all sub-images with a significant p-value, we can clearly see deforestation in the original images (Figure~\ref{fig-sat1-split}). Only four images, namely at positions (6, 3), (5, 3), (4, 5) and (5, 5), show signs of deforestation caused by roads, but the test does not reject the hypothesis. This is expected as by construction our test detects change regions with positive Lebesgue measure consistently, but not lines. For all sub-images without deforestation, the test correctly did not reject the hypothesis.

\section{Summary} \label{sec-summary}

We have introduced two tests for arbitrary changes in location for random fields, the GMD test extension of the method of \mycite{schmidt2024detecting} for two-dimensional data, and the Var test as an extension of the classical ANOVA. The tests assume independent observations and the existence of $2 + \varepsilon$ (GMD test) resp. $4 + \varepsilon$ (Var test) central moments. We have shown the convergence of the test statistic of the Var test to a normal distribution, and its consistency against  change regions with positive Lebesgue-measure. For the GMD test, both these properties can be deduced from the proofs in \mycite{schmidt2024detecting}. In a simulation study, we have demonstrated that both tests can be successfully applied to correlated data after {\color[RGB]{0, 0, 0}whitening}. In an application to satellite images, we showed that the Var test can reliably detect regions affected by deforestation.

\section*{Acknowledgments}
This research was (partially) funded in the course of TRR 391 Spatio-temporal Statistics for the
Transition of Energy and Transport (520388526) by the Deutsche Forschungsgemeinschaft (DFG,
German Research Foundation).\\
The authors gratefully acknowledge the computing time provided on the Linux HPC cluster at TU Dortmund University (LiDO3), partially funded in the course of the Large-Scale Equipment Initiative by the Deutsche Forschungsgemeinschaft (DFG, German Research Foundation) as project 271512359.

\bibliographystyle{elsarticle-harv} 
\bibliography{cas-refs}

@article{schmidt2024detecting,
  title={Detecting changes in the trend function of heteroscedastic time series},
  author={Schmidt, Sara Kristin},
  journal={Bernoulli},
  volume={30},
  number={4},
  pages={2598--2622},
  year={2024},
  publisher={Bernoulli Society for Mathematical Statistics and Probability}
}

@article{schmidt2021asymptotic,
  title={An asymptotic test for constancy of the variance under short-range dependence},
  author={Schmidt, Sara K. and Wornowizki, Max and Fried, Roland and Dehling, Herold},
  journal={The Annals of Statistics},
  volume={49},
  number={6},
  pages={3460--3481},
  year={2021},
  publisher={Institute of Mathematical Statistics}
}

@book{billingsley1968convergence,
  title={Convergence of probability measures},
  author={Billingsley, Patrick},
  year={1968},
  publisher={John Wiley \& Sons}
}

@book{serfling, 
    title = {Approximation theorems of mathematical statistics}, 
    author = {Serfling, Robert J.}, 
    year = {1980},
    publisher = {John Wiley \& Sons}
}

@book{Apostol1974,
author = {Apostol, Tom M},
edition = {Second Edition},
publisher = {Addison-Wesley},
title = {{Mathematical Analysis}},
year = {1974}
}

@article{marcinkiewicz1937fonctions,
  title={Sur les fonctions ind{\'e}pendantes},
  author={Marcinkiewicz, J{\'o}zef and Zygmund, Antoni},
  journal={Fundamenta Mathematicae},
  volume={29},
  pages={60--90},
  year={1937},
  publisher={Instytut Matematyczny Polskiej Akademii Nauk}
}

@book{chow_teicher_1997,
 address={New York, NY},
 edition={3rd ed. 1997},
 title={Probability Theory : Independence, Interchangeability, Martingales}, 
 ISBN={9781461219507}, 
 DOI={10.1007/978-1-4612-1950-7},
 publisher={Imprint: Springer},
 author={Chow, Yuan Shih and Teicher, Henry}, year={1997}
 }

@article{okhrin2020new,
  title={New approaches for monitoring image data},
  author={Okhrin, Yarema and Schmid, Wolfgang and Semeniuk, Ivan},
  journal={IEEE Transactions on Image Processing},
  volume={30},
  pages={921--933},
  year={2020},
  publisher={IEEE}
}

@article{okhrin2025monitoring,
  title={Monitoring time dependent image processes for detecting shifts in pixel intensities},
  author={Okhrin, Yarema and Petruk, Viktoriia and Schmid, Wolfgang},
  journal={Computational Statistics},
  pages={1--32},
  year={2025},
  publisher={Springer}
}

@article{mayrhofer2025robust,
  title={Robust covariance estimation and explainable outlier detection for matrix-valued data},
  author={Mayrhofer, Marcus and Radoji{\v{c}}i{\'c}, Una and Filzmoser, Peter},
  journal={Technometrics},
  pages={1--15},
  year={2025},
  publisher={Taylor \& Francis}
}

@article{zhang2019spatial,
  title={Spatial CUSUM for Signal Region Detection},
  author={Zhang, Xin and Zhu, Zhengyuan},
  journal={arXiv preprint arXiv:1904.03246},
  year={2019}
}

@article{bucchia2014testing,
  title={Testing for epidemic changes in the mean of a multiparameter stochastic process},
  author={Bucchia, B{\'e}atrice},
  journal={Journal of Statistical Planning and Inference},
  volume={150},
  pages={124--141},
  year={2014},
  publisher={Elsevier}
}

@article{otto2016detection,
  title={Detection of spatial change points in the mean and covariances of multivariate simultaneous autoregressive models},
  author={Otto, Philipp and Schmid, Wolfgang},
  journal={Biometrical Journal},
  volume={58},
  number={5},
  pages={1113--1137},
  year={2016},
  publisher={Wiley Online Library}
}

@article{kirch2025scan,
  title={Scan statistics for the detection of anomalies in M-dependent random fields with applications to image data},
  author={Kirch, Claudia and Klein, Philipp and Meyer, Marco},
  journal={Journal of the American Statistical Association},
  pages={1--20},
  year={2025},
  publisher={Taylor \& Francis}
}

@article{amirkhani2020novel,
  title={A novel framework for spatiotemporal monitoring and post-signal diagnosis of processes with image data},
  author={Amirkhani, Farzad and Amiri, Amirhossein},
  journal={Quality and Reliability Engineering International},
  volume={36},
  number={2},
  pages={705--735},
  year={2020},
  publisher={Wiley Online Library}
}

@article{gromenko2017detection,
  title={Detection of change in the spatiotemporal mean function},
  author={Gromenko, Oleksandr and Kokoszka, Piotr and Reimherr, Matthew},
  journal={Journal of the Royal Statistical Society Series B: Statistical Methodology},
  volume={79},
  number={1},
  pages={29--50},
  year={2017},
  publisher={Oxford University Press}
}

@article{jiang2005liquid,
  title={Liquid crystal display surface uniformity defect inspection using analysis of variance and exponentially weighted moving average techniques},
  author={Jiang, BC and Wang, C-C and Liu, H-C},
  journal={International Journal of Production Research},
  volume={43},
  number={1},
  pages={67--80},
  year={2005},
  publisher={Taylor \& Francis}
}

@article{steland2025detection,
  title={Detection of suspicious areas in non-stationary Gaussian fields and locally averaged non-Gaussian linear fields},
  author={Steland, Ansgar},
  journal={Journal of Statistical Planning and Inference},
  volume={238},
  pages={106273},
  year={2025},
  publisher={Elsevier}
}

@article{fuentes2005formal,
  title={A formal test for nonstationarity of spatial stochastic processes},
  author={Fuentes, Montserrat},
  journal={Journal of Multivariate Analysis},
  volume={96},
  number={1},
  pages={30--54},
  year={2005},
  publisher={Elsevier}
}

@book{csorgo1997limit,
  title={Limit theorems in change-point analysis},
  author={Cs{\"o}rg{\H{o}}, Mikl{\'o}s and Horv{\'a}th, Lajos},
  publisher={Wiley},
  year={1997}, 
  address={Chichester}
}

@Manual{R,
    title = {R: A Language and Environment for Statistical Computing},
    author = {{R Core Team}},
    organization = {R Foundation for Statistical Computing},
    address = {Vienna, Austria},
    year = {2024},
    url = {https://www.R-project.org/},
}

@BOOK{lapack99,
AUTHOR = {Anderson, E. and Bai, Z. and Bischof, C. and
Blackford, S. and Demmel, J. and Dongarra, J. and
Du Croz, J. and Greenbaum, A. and Hammarling, S. and
McKenney, A. and Sorensen, D.},
TITLE = {{LAPACK} Users' Guide},
EDITION = {Third},
PUBLISHER = {Society for Industrial and Applied Mathematics},
YEAR = {1999},
ADDRESS = {Philadelphia, PA},
ISBN = {0-89871-447-8 (paperback)} ,
url = {https://netlib.org/lapack/lug/lapack_lug.html}
}

@Book{ggplot2,
    author = {Hadley Wickham},
    title = {ggplot2: Elegant Graphics for Data Analysis},
    publisher = {Springer-Verlag New York},
    year = {2016},
    isbn = {978-3-319-24277-4},
    url = {https://ggplot2.tidyverse.org},
}

@Manual{robcp,
    title = {robcp: Robust Change-Point Tests},
    author = {Sheila Görz and Alexander Dürre},
    year = {2025},
    note = {R package version 0.3.10},
    url = {https://cran.r-project.org/package=robcp},
}

@Manual{SChangeBlock,
    title = {SChangeBlock: Spatial structural change detection by an
      analysis of variability between blocks of observations},
    author = {Sheila Görz},
    year = {2026},
    note = {R package version 0.1.1},
    url = {https://cran.r-project.org/package=SChangeBlock},
  }

@Manual{magrittr,
    title = {magrittr: A Forward-Pipe Operator for R},
    author = {Stefan Milton Bache and Hadley Wickham},
    year = {2022},
    note = {R package version 2.0.3},
    url = {https://CRAN.R-project.org/package=magrittr},
  }

@Article{tidyverse,
    title = {Welcome to the {tidyverse}},
    author = {Hadley Wickham and Mara Averick and Jennifer Bryan and Winston Chang and Lucy D'Agostino McGowan and Romain François and Garrett Grolemund and Alex Hayes and Lionel Henry and Jim Hester and Max Kuhn and Thomas Lin Pedersen and Evan Miller and Stephan Milton Bache and Kirill Müller and Jeroen Ooms and David Robinson and Dana Paige Seidel and Vitalie Spinu and Kohske Takahashi and Davis Vaughan and Claus Wilke and Kara Woo and Hiroaki Yutani},
    year = {2019},
    journal = {Journal of Open Source Software},
    volume = {4},
    number = {43},
    pages = {1686},
    doi = {10.21105/joss.01686},
  }

@Manual{xts,
    title = {xts: eXtensible Time Series},
    author = {Jeffrey A. Ryan and Joshua M. Ulrich},
    year = {2024},
    note = {R package version 0.14.0},
    url = {https://CRAN.R-project.org/package=xts},
  }

@Manual{magick,
    title = {magick: Advanced Graphics and Image-Processing in R},
    author = {Jeroen Ooms},
    year = {2024},
    note = {R package version 2.8.4},
    url = {https://CRAN.R-project.org/package=magick},
  }

@Article{gdalcubes1,
    title = {On-Demand Processing of Data Cubes from Satellite Image Collections with the gdalcubes Library},
    author = {Marius Appel and Edzer Pebesma},
    year = {2019},
    journal = {Data},
    volume = {4},
    number = {3},
    url = {https://www.mdpi.com/2306-5729/4/3/92},
    article-number = {92},
  }

@Misc{gdalcubes2,
    title = {Cloud-based processing of satellite image collections in R using STAC, COGs, and on-demand data cubes},
    author = {Marius Appel and Edzer Pebesma and Matthias Mohr},
    year = {2021},
    note = {Blog post on r-spatial.org},
    url = {https://r-spatial.org/r/2021/04/23/cloud-based-cubes.html},
  }

@Manual{spatialreg,
    title = {spatialreg: Spatial Regression Analysis},
    author = {Roger Bivand and Gianfranco Piras},
    year = {2026},
    note = {R package version 1.4-3},
    url = {https://cran.r-project.org/package=spatialreg},
  }

@article{schnabel1990new,
  title={A new modified Cholesky factorization},
  author={Schnabel, Robert B and Eskow, Elizabeth},
  journal={SIAM Journal on Scientific and Statistical Computing},
  volume={11},
  number={6},
  pages={1136--1158},
  year={1990},
  publisher={SIAM}
}

@article{andrews1991heteroskedasticity,
  title={Heteroskedasticity and autocorrelation consistent covariance matrix estimation},
  author={Andrews, Donald WK},
  journal={Econometrica: Journal of the Econometric Society},
  volume={59},
  number={3},
  pages={817--858},
  year={1991},
  publisher={JSTOR}
}

@article{robbins2011mean,
  title={Mean shift testing in correlated data},
  author={Robbins, Michael and Gallagher, Colin and Lund, Robert and Aue, Alexander},
  journal={Journal of Time Series Analysis},
  volume={32},
  number={5},
  pages={498--511},
  year={2011},
  publisher={Wiley Online Library}
}

@article{tecuapetla2017autocovariance,
  title={Autocovariance estimation in regression with a discontinuous signal and m-dependent errors: A difference-based approach},
  author={Tecuapetla-G{\'o}mez, Inder and Munk, Axel},
  journal={Scandinavian Journal of Statistics},
  volume={44},
  number={2},
  pages={346--368},
  year={2017},
  publisher={Wiley Online Library}
}

@article{crainiceanu2007nonmonotonic,
  title={Nonmonotonic power for tests of a mean shift in a time series},
  author={Crainiceanu, Ciprian M and Vogelsang, Timothy J},
  journal={Journal of Statistical Computation and Simulation},
  volume={77},
  number={6},
  pages={457--476},
  year={2007},
  publisher={Taylor \& Francis}
}

@article{knight2014landsat,
  title={Landsat-8 operational land imager design, characterization and performance},
  author={Knight, Edward J and Kvaran, Geir},
  journal={Remote sensing},
  volume={6},
  number={11},
  pages={10286--10305},
  year={2014},
  publisher={MDPI}
}

@article{myneni1995interpretation,
  title={The interpretation of spectral vegetation indexes},
  author={Myneni, Ranga B and Hall, Forrest G and Sellers, Piers J and Marshak, Alexander L},
  journal={IEEE Transactions on Geoscience and remote Sensing},
  volume={33},
  number={2},
  pages={481--486},
  year={1995},
  publisher={Ieee}
}

@article{tucker1979red,
  title={Red and photographic infrared linear combinations for monitoring vegetation},
  author={Tucker, Compton J},
  journal={Remote sensing of Environment},
  volume={8},
  number={2},
  pages={127--150},
  year={1979},
  publisher={Elsevier}
}

\appendix

\renewcommand\thefigure{\Alph{section}.\arabic{figure}}    
\renewcommand\thetable{\Alph{section}.\arabic{table}} 

\newpage
\section{Detailed proofs}
\setcounter{figure}{0}   
\setcounter{table}{0}   

\begin{proof}[Proof of Proposition \ref{prop-CLT}]
    Let $h, k \in \{1, ..., b_n\} \times \{1, ..., b_m\}$ be fixed. We notice that
    $$(Y^{(n,m)}_{i,j}) = \left(Y^{(n,m)}_{i,j}; \, i, j \in \mathcal{I}_{h,k}; \, n,m \in \mathbb{N}\right)$$ is a double array. For every combination of $n, m \in \mathbb{N}$ the random variables $\left(Y^{(n,m)}_{i,j}\right)_{i,j}$ are independent. When defining $S_{h,k}^{(n,m)} := \doublesumBlock Y_{i,j}$, we can rewrite the statistic as
    $$ \sqrt{l_n l_m} \,\frac{\hat{\nu}_{h,k}^{(n,m)}}{\sigma} = \frac{S_{h,k}^{(n,m)}}{\sqrt{l_n l_m \sigma^2}}$$
    and it holds that
    \begin{align*}
        \ew\left(S_{h,k}^{(n,m)}\right) = \ew\left(\doublesumBlock Y_{i,j} \right)  {\color[RGB]{0, 0, 0}= 0, \quad}
        \Var\left(S_{h,k}^{(n,m)}\right) = \Var\left(\doublesumBlock Y_{i,j} \right) = l_n l_m \sigma^2.
    \end{align*}
    For the proposition to hold, we need to verify Lyapunov's condition. Denote with $M_{k}$ the $k$-th central moment of $Y_{i,j}$, then we get
    \begin{align*}
        \frac{1}{\Var\left(S_{h,k}^{(n,m)} \right)^{1 + \frac{\delta}{2}}} \doublesumBlock \ew \left(\left|  Y_{i,j} \right|^{2 + \delta} \right) 
        = \frac{1}{\left(l_n l_m \sigma^2\right)^{1 + \frac{\delta}{2}}} \doublesumBlock M_{2+\delta} 
        = \mathcal{O}\left(\frac{1}{(l_n l_m)^{\frac{\delta}{2}}}\right).
    \end{align*}
\end{proof}

\begin{proof}[Proof of Lemma~\ref{lemm-4thM}]
    First, we notice that according to Proposition~\ref{prop-CLT}, $\frac{\sqrt{nm} \bar{Y}_{n,m}}{\sigma} \overset{\mathcal{D}}{\longrightarrow} \mathcal{N}(0, 1)$ and with the Continuous Mapping Theorem, $\left[\frac{\sqrt{nm} \bar{Y}_{n,m}}{\sigma}\right]^4 \overset{\mathcal{D}}{\longrightarrow} \mathcal{N}(0, 1)^4$.\vspace{1mm}\\
    Next we show that $\left\{\frac{(nm)^2 \bar{Y}_{n,m}^4}{\sigma^4} \, \big| \, n \geq 1, \, m \geq 1\right\}$ is uniformly integrable. {\color[RGB]{0, 0, 0}From the M-Z inequality we get}
    $$\ew\left[\left(\frac{\left(\sum_{i = 1}^n \sum_{j = 1}^m Y_{i,j}\right)^4}{(nm)^2 \sigma^4}\right)^{1 + \delta} \right] < C < \infty, $$
    for a positive constant $C$. According to the postscript to Theorem~5.3 in \mycitep{billingsley1968convergence}{32}, $\left\{\frac{(nm)^2 \bar{Y}_{n,m}^4}{\sigma^4} \, \big| \, n \geq 1, \, m \geq 1\right\}$ is therefore uniformly integrable. We can apply Theorem~5.4 from \mycitep{billingsley1968convergence}{32} and conclude that 
    $$\kappa_{n,m}^{(4)} = \ew\left[\frac{(nm)^2 \bar{Y}_{n,m}^4}{\sigma^4}\right] \longrightarrow \ew(Z^4), \quad Z \sim \mathcal{N}(0, 1).$$
    The fourth moment of a standard normal distribution is known to be 3.
\end{proof}

\begin{proof}[Verification of Lyapunov's condition in Theorem~\ref{th-LyapunovCLT-withoutCentering}]
We need to verify that
\begin{align*}
    \lim_{n \rightarrow \infty} \frac{1}{\left( \sum_{h = 1}^{b_n} \sum_{k = 1}^{b_m} \Var(\nutilde^2) \right)^{1 + \frac{\varepsilon}{4}}} \sum_{h = 1}^{b_n} \sum_{k = 1}^{b_m} \ew\left[\left| \nutilde^2 - \sigma^2 \right|^{2 + \frac{\varepsilon}{2}} \right] = 0.
\end{align*}
Since $\mutilde^2$ and $\sigma^2$ are both positive (first step), by the $c_r$-inequality (second step), and by using again {\color[RGB]{0, 0, 0} the M-Z inequality} (fourth step), we get for the numerator 
\begin{align*}
    \sum_{h = 1}^{b_n} \sum_{k = 1}^{b_m}\ew\left[\left| \nutilde^2 - \sigma^2 \right|^{2 + \frac{\varepsilon}{2}} \right]
    &\leq \sum_{h = 1}^{b_n} \sum_{k = 1}^{b_m} \ew\left[\left| \nutilde^2 + \sigma^2 \right|^{2 + \frac{\varepsilon}{2}} \right] \\
    &\leq const \cdot \sum_{h = 1}^{b_n} \sum_{k = 1}^{b_m} \left(\ew\left( |\nutilde|^{4 + \varepsilon} \right) + \sigma^{4 + \varepsilon}\right)\\
    &= const \cdot \sum_{h = 1}^{b_n} \sum_{k = 1}^{b_m} \left(\frac{1}{(l_n l_m)^{2 + \frac{\varepsilon}{2}}} \ew\left(\left|  \doublesumBlock Y_{i,j} \right|^{4 + \varepsilon} \right) + \sigma^{4 + \varepsilon}\right)\\
    &< const \cdot b_n b_m
\end{align*}
for $n,m$ large enough and $\ew(|Y_{1,1}|^{4 + \varepsilon}) < \infty$, as assumed. Now considering the denominator, we have
$$\sum_{h = 1}^{b_n} \sum_{k = 1}^{b_m} \Var\left(\nutilde^2\right) = b_n b_m \cdot \sigma^4 (\kappa_{l_n, l_m}^{(4)} - 1),$$
so in total, we get
\begin{align*}
    \frac{1}{\left( \sum_{h = 1}^{b_n} \sum_{k = 1}^{b_m} \Var(\nutilde^2) \right)^{1 + \frac{\varepsilon}{4}}} &\sum_{h = 1}^{b_n} \sum_{k = 1}^{b_m} \ew\left[\left| \nutilde^2 - \sigma^2 \right|^{2 + \frac{\varepsilon}{2}} \right]
    < \frac{const}{(b_n b_m)^{\frac{\varepsilon}{4}}} \Big(\underbrace{\kappa_{l_n, l_m}^{(4)} - 1}_{\rightarrow 2}\Big)^{-(1 + \frac{\varepsilon}{4})}
    = \mathcal{O}\left(\frac{1}{(b_n b_m)^{\frac{\varepsilon}{4}}}\right).
\end{align*}

\end{proof}    

\newpage
\section{Tables and figures}

\begin{table}[ht]
\centering
\begin{tabular}{l c rr c rr c rr}
  \hline
   && \multicolumn{2}{c}{$\mathcal{N}(0, 1)$} && \multicolumn{2}{c}{$t_3$} && \multicolumn{2}{c}{$\chi^2_2$} \\ 
   && GMD & Var && GMD & Var && GMD & Var\\
  \hline
  &&\multicolumn{8}{c}{$n = 10$} \\
  \hline
  $\mathbb{A}_{1}$ && 0.482 & 0.409 && 0.253 & 0.172 && 0.216 & 0.117 \\ 
  $\mathbb{A}_{2}$ && 0.618 & 0.538 && 0.335 & 0.229 && 0.249 & 0.163 \\ 
  $\mathbb{A}_{3}$ && 0.251 & 0.175 && 0.136 & 0.098 && 0.135 & 0.089 \\ 
  $\mathbb{A}_{4}$ && 0.378 & 0.288 && 0.227 & 0.138 && 0.150 & 0.093 \\ 
   \hline
  &&\multicolumn{8}{c}{$n = 20$} \\
  \hline
  $\mathbb{A}_{1}$ && 0.250 & 0.257 && 0.103 & 0.114 && 0.098 & 0.105 \\ 
  $\mathbb{A}_{2}$ && 0.938 & 0.925 && 0.400 & 0.439 && 0.300 & 0.307 \\ 
  $\mathbb{A}_{3}$ && 0.413 & 0.389 && 0.124 & 0.148 && 0.123 & 0.124 \\ 
  $\mathbb{A}_{4}$ && 0.688 & 0.687 && 0.253 & 0.261 && 0.169 & 0.210 \\  
   \hline
  &&\multicolumn{8}{c}{$n = 50$} \\
  \hline
  $\mathbb{A}_{1}$ && 0.580 & 0.846 && 0.228 & 0.318 && 0.166 & 0.238 \\ 
  $\mathbb{A}_{2}$ && 1.000 & 1.000 && 0.988 & 0.987 && 0.949 & 0.948 \\ 
  $\mathbb{A}_{3}$ && 0.999 & 1.000 && 0.687 & 0.718 && 0.548 & 0.535 \\ 
  $\mathbb{A}_{4}$ && 1.000 & 1.000 && 0.712 & 0.746 && 0.565 & 0.578 \\  
   \hline
\end{tabular}
\caption{Simulated rejection rates of the GMD and the Var test at the nominal significance level $\alpha = 0.05$ for a location shift $H = 0.5$, $n \in \{10, 20, 50\}$, and different innovation distributions under the alternatives $\mathbb{A}_{1}$ to $\mathbb{A}_{4}$, rounded to three digits.} 
\label{tab-RRH20}
\end{table}

\begin{table}[ht]
\centering
\begin{tabular}{lrrrrrrrrr}
 &&& \multicolumn{3}{c}{SMA(1)} && \multicolumn{3}{c}{SAR(1)}\\
  \hline
 &&& $\rho = 0.1$ & $\rho = 0.2$ & $\rho = 0.3$ && $\rho = 0.1$ & $\rho = 0.2$ & $\rho = 0.3$  \\ 
  \hline
  &&&\multicolumn{7}{c}{$n = 16$} \\
  \cline{2-10}
  de-correlation & $\mathbb{A}_{1}$ && 0.410 & 0.364 & 0.270 && 0.419 & 0.373 & 0.286 \\  
            & $\mathbb{A}_{2}$ && 0.506 & 0.494 & 0.435 && 0.533 & 0.492 & 0.413 \\  
            & $\mathbb{A}_{3}$ && 0.423 & 0.449 & 0.338 && 0.428 & 0.425 & 0.317 \\  
            & $\mathbb{A}_{4}$ && 0.525 & 0.505 & 0.418 && 0.500 & 0.491 & 0.417 \\  
  \cline{2-10}
  &&&\multicolumn{7}{c}{$n = 36$} \\
  \cline{2-10}
            & $\mathbb{A}_{1}$ && 1.000 & 1.000 & 0.996 && 1.000 & 1.000 & 1.000 \\  
            & $\mathbb{A}_{2}$ && 0.999 & 1.000 & 0.991 && 0.998 & 0.999 & 0.996 \\  
            & $\mathbb{A}_{3}$ && 1.000 & 1.000 & 0.994 && 0.999 & 1.000 & 1.000 \\  
            & $\mathbb{A}_{4}$ && 1.000 & 1.000 & 0.994 && 1.000 & 1.000 & 1.000 \\  
  \cline{2-10}
  &&&\multicolumn{7}{c}{$n = 48$} \\
  \cline{2-10}
            & $\mathbb{A}_{1}$ && 0.957 & 0.974 & 0.914 && 0.965 & 0.973 & 0.944 \\  
            & $\mathbb{A}_{2}$ && 0.815 & 0.898 & 0.808 && 0.822 & 0.908 & 0.853 \\  
            & $\mathbb{A}_{3}$ && 0.905 & 0.950 & 0.868 && 0.906 & 0.958 & 0.916 \\  
            & $\mathbb{A}_{4}$ && 0.997 & 0.996 & 0.967 && 0.993 & 0.992 & 0.964 \\  
   \hline
  &&&\multicolumn{7}{c}{$n = 16$} \\
  \cline{2-10}
  cut-off (SMA) & $\mathbb{A}_{1}$ & & 0.453 & 0.379 & 0.317  && 1.000 & 0.995 & 0.878 \\ 
  model residuals (SAR) &   $\mathbb{A}_{2}$ & & 1.000 & 0.996 & 0.984     &&       0.803 & 0.450 & 0.193 \\  
   &   $\mathbb{A}_{3}$ & & 0.550 & 0.469 & 0.407     &&         0.918 & 0.564 & 0.310 \\ 
   &   $\mathbb{A}_{4}$ & & 0.570 & 0.513 & 0.447     &&         0.250 & 0.090 & 0.042 \\ 
   \cline{2-10}
  &&&\multicolumn{7}{c}{$n = 36$} \\
  \cline{2-10}
   &   $\mathbb{A}_{1}$ & & 0.648 & 0.570 & 0.508     &&         1.000 & 1.000 & 1.000 \\   
   &   $\mathbb{A}_{2}$ & & 1.000 & 1.000 & 1.000     &&         1.000 & 1.000 & 1.000 \\   
   &   $\mathbb{A}_{3}$ & & 1.000 & 1.000 & 1.000     &&         1.000 & 1.000 & 1.000 \\   
   &   $\mathbb{A}_{4}$ & & 1.000 & 0.997 & 0.987     &&         1.000 & 1.000 & 1.000 \\  
   \cline{2-10}
  &&&\multicolumn{7}{c}{$n = 48$} \\
  \cline{2-10}
   &   $\mathbb{A}_{1}$ & & 0.999 & 0.995 & 0.993     &&         1.000 & 1.000 & 0.990 \\   
   &   $\mathbb{A}_{2}$ & & 1.000 & 1.000 & 1.000     &&         1.000 & 1.000 & 1.000 \\   
   &   $\mathbb{A}_{3}$ & & 1.000 & 1.000 & 1.000     &&         1.000 & 1.000 & 1.000 \\   
   &   $\mathbb{A}_{4}$ & & 1.000 & 1.000 & 0.996    &&         1.000 & 1.000 & 1.000 \\  
  \hline
\end{tabular}
\caption{Simulated rejection rates of the Var test at the nominal significance level $\alpha = 0.05$ for a location shift $H = 2$, $n \in \{16, 36, 48\}$, SMA(1) and SAR(1) models with parameters $\rho \in \{0.1, 0.2, 0.3\}$ whitened using de-correlation, the cut-off method and SAR(1) model residuals, under the alternatives $\mathbb{A}_{1}$ to $\mathbb{A}_{4}$, rounded to three digits.}
\label{tab-RRH2dep}
\end{table}

\end{document}